\newtheorem{theorem}{Theorem}
\newtheorem{remark}[theorem]{Remark}
\newtheorem{lemma}[theorem]{Lemma}
\newtheorem{proposition}[theorem]{Proposition}
\newtheorem{corollary}[theorem]{Corollary}
\numberwithin{equation}{section}
\numberwithin{theorem}{section}
\let\@wraptoccontribs\wraptoccontribs\makeatother
\title{Characteristic polynomials of random truncations: moments, duality and asymptotics}
\author{Alexander Serebryakov and Nick Simm*\\\\
\textit{With an Appendix by Guillaume Dubach$^\dagger$}\footnote{$^{\dagger}$ IST Austria, Am Campus 1, 3400 Klosterneuburg, Austria. E-mail: \textit{guillaume.dubach@ist.ac.at.}}}
\address{*Department of Mathematics, University of Sussex, Brighton, BN1 9RH, United Kingdom.}
\email{a.serebryakov@sussex.ac.uk, n.j.simm@sussex.ac.uk}
\begin{document}

\begin{abstract}
We study moments of characteristic polynomials of truncated Haar distributed matrices from the three classical compact groups $\mathrm{O(N)}$, $\mathrm{U(N)}$ and $\mathrm{Sp(2N)}$. For finite matrix size we calculate the moments in terms of hypergeometric functions of matrix argument and give explicit integral representations highlighting the duality between the moment and the matrix size as well as the duality between the orthogonal and symplectic cases. Asymptotic expansions in strong and weak non-unitarity regimes are obtained. Using the connection to matrix hypergeometric functions, we establish limit theorems for the log-modulus of the characteristic polynomial evaluated on the unit circle.
\end{abstract}
\maketitle

\section{Introduction and main results}
Characteristic polynomials of random matrices have been the subject of intense research for a few decades with several applications ranging from QCD \cite{Stephanov_1996,Akemann:2004aa}, quantum chaos and disordered systems \cite{Andreev_Simons,haake2010quantum} and equilibria of complex systems \cite{FK16}. They are also mathematically rich objects having been connected to the distribution of zeros of the Riemann Zeta function \cite{Keating_Snaith_2000} and other $L$-functions \cite{Keating:2000ub,Conrey:2003tk} with study of their moments playing a prominent role. Recently the correspondence of random matrices with logarithmically correlated Gaussian fields and Gaussian multiplicative chaos measures generated a resurgence of interest in characteristic polynomials of random matrices, see \cite{FHK12,FK14}, the recent survey \cite{Bailey:2021uz} and references therein. This correspondence often relies on exact formulae for moments or correlation functions of characteristic polynomials, together with a precise asymptotic analysis. Significant progress was made in recent years in the one dimensional setting (\textit{e.g.} Hermitian ensembles) where such averages are most readily available \cite{Webb2014TheCP,Berestycki_Webb_Wong_2018, Forkel:2021us}. 

In this paper we study moments of characteristic polynomials for a class of non-Hermitian random matrices known as \textit{truncations}, introduced in the works \cite{Khoruzhenko_2010,Zyczkowski:2000ue} and defined as follows. Let $\mathrm{O(N)}$, $\mathrm{U(N)}$ and $\mathrm{Sp(2N)}$ denote the three classical compact groups of $N \times N$ orthogonal, unitary or symplectic matrices. These matrix groups each come with a unique translation invariant measure known as Haar measure, see \cite{Meckes_book_2019} for a comprehensive treatise. Choosing an element $U$ from one of $\mathrm{O(N)}$, $\mathrm{U(N)}$ or $\mathrm{Sp(2N)}$ with respect to the Haar measure, we consider the sub-block decomposition
\begin{align}
	U 
	= 
	\begin{pmatrix}
		A && B\\
		C && D	
	\end{pmatrix}
	, \label{uconstruct}
\end{align}
where the principal sub-matrix $A$ is of size $M \times M$ with $M < N$. By invariance of the Haar measure we consider the matrix $A$ without loss of generality and say that $A$ belongs to the truncated orthogonal, unitary or symplectic ensemble. We use the standard Dyson notation $\beta=1,2$ or $4$ to distinguish the orthogonal, unitary or symplectic cases. The decomposition \eqref{uconstruct} arises very frequently in mesoscopic physics where the various sub-blocks are used to calculate reflection and transmission properties of an open quantum system with given symmetries \cite{DBB10}. Mathematically, the truncations are related to two well-studied ensembles of random matrices that arise in particular limits, as follows. After proper rescaling, in the limit $N \to \infty$ with $M$ fixed it is known that the entries of $A$ are well approximated by i.i.d. standard Gaussian random variables \cite{Diaconis:1992aa}. Gaussian matrices of this type with no additional symmetries are known as \textit{Ginibre ensembles} \cite{Ginibre_65} and may be regarded as the archetype of non-Hermitian random matrices. At the other extreme, if $N-M$ is of order $1$ then we may expect spectral characteristics similar to random matrices drawn from the classical compact groups with Haar measure.

Our goal is to obtain exact formulae and asymptotic expansions for moments of the type
\begin{align}
R^{(\beta)}_{2k}(x) =
\begin{cases}
\begin{aligned}
&\mathbb{E}
		\left [
			\det ( xI_{M} - A)^{2k}
		\right ]
		& \text{for $\beta = 1$,}\\[1ex]
&		\mathbb{E}
		\left [
			|\det ( xI_{M} - A)|^{2k}
		\right ]
		& \text{for $\beta = 2$,} \\[1ex]
&    \mathbb{E}
		\left [
			\det ( xI_{M} - A)^{k}
		\right ] & \text{for $\beta = 4$},
	\end{aligned} 
\end{cases}
\label{R_definition}
\end{align}
where $x \in \mathbb{C}$ and $I_{M}$ is the $M \times M$ identity matrix. We emphasize that when $\beta=4$, we assume throughout that $x$ is a real quaternion and interpret $\det(xI_{M}-A)$ using the standard representation of quaternions as $2 \times 2$ complex valued matrices. Typically we shall assume that the exponent $k$ is a positive integer, though in certain situations we also consider non-integer moments and write $k = \frac{\gamma}{2}$ where $\gamma$ is a general parameter. This way of defining $R^{(\beta)}_{2k}(x)$ allows us to present unified formulae for each symmetry index $\beta \in \{1,2,4\}$.

In order to state our results, we define
\begin{equation}
	\beta' = \frac{4}{\beta}, \label{betap}
\end{equation}
and the Vandermonde determinant
\begin{equation}
\Delta(\vec{t}) = \prod_{1 \leq i < j \leq k}(t_{j}-t_{i}) = \det\bigg\{t_{i}^{j-1}\bigg\}_{i,j=1}^{k}.
\end{equation}

\begin{theorem} \label{theorem_truncated_duality_integrals_sigma}
	Consider the averages defined by \eqref{R_definition}. Then for any $k \in \mathbb{N}$ and $\beta \in \{2,4\}$, we have

	\begin{align}
		&R^{(\beta)}_{2k}(x)
		=
		\frac{1}{S_{k,N}(\beta)}
		\int_{[0,1]^k}
		\prod_{i=1}^{k}dt_{i}\,t_i^{N-M}
		\left ( 1 + (|x|^{2} - 1)t_i \right )^M|\Delta(\vec{t})|^{\beta'}
		\label{truncation_characteristic_poly},
	\end{align}
	where
	\begin{equation}
	S_{k,N}(\beta)= \prod_{j=0}^{k-1}\frac{\Gamma(N+1+j\frac{2}{\beta})\Gamma(1+j\frac{2}{\beta})\Gamma(1+(j+1)\frac{2}{\beta})}{\Gamma(N+2+(k+j-1)\frac{2}{\beta})\Gamma(1+\frac{2}{\beta})}. \label{skn}
	\end{equation}
If $\beta=1$ then \eqref{truncation_characteristic_poly} holds with the replacement $|x|^{2} \to x^{2}$.
\end{theorem}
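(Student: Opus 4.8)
The plan is to reduce the $M$-fold eigenvalue average to a $k$-fold integral in two stages: first realise $R^{(\beta)}_{2k}(x)$ as a hypergeometric function of a $k\times k$ matrix argument, and then apply the Euler--Herz integral representation of that function, with the normalisation constant pinned down by a Selberg integral.

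For $\beta=2$ I would start from the Życzkowski--Sommers determinantal law of the eigenvalues $z_1,\dots,z_M$ of the truncated unitary matrix $A$ on the unit disc, with density proportional to $|\Delta(\vec z)|^2\prod_j(1-|z_j|^2)^{N-M-1}$ (the case $N=M$ being a degenerate limit). Since this weight is radially symmetric its monic orthogonal polynomials are the monomials $z^n$, so the Heine/Andréief identity gives
\[
R^{(2)}_{2k}(x)=\mathbb{E}\Big[\textstyle\prod_{j=1}^M|x-z_j|^{2k}\Big]=\frac{\det\big[\int_{\mathbb{D}}z^{\,i}\bar z^{\,j}\,|x-z|^{2k}(1-|z|^2)^{N-M-1}\,d^2z\big]_{i,j=0}^{M-1}}{\prod_{m=0}^{M-1}h_m},\qquad h_m=\frac{\pi\,m!\,(N-M-1)!}{(N-M+m)!}.
\]
Writing $|x-z|^{2k}=p(z)\overline{p(z)}$ with $p(z)=(x-z)^k$ exhibits the numerator as $P^{\mathsf T}HP^{*}$, where $P$ is the banded Toeplitz matrix of multiplication by $p$ and $H=\operatorname{diag}(h_0,h_1,\dots)$; Cauchy--Binet collapses the $M\times M$ determinant into a sum over $k$-element subsets of minors of $P$, which by Jacobi--Trudi are Schur polynomials evaluated at $k$ equal arguments $x$ and hence reduce to a single $k\times k$ determinant. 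One recognises this, up to explicit $\Gamma$-factors, as the Gauss hypergeometric function ${}_2F_1$ of the $k\times k$ Hermitian matrix argument $(1-|x|^2)I_k$, with parameters chosen so that the $\det(I-T)$ factor drops out, consistent with $R^{(2)}_{2k}$ being a polynomial of degree $Mk$ in $|x|^2$. For $\beta\in\{1,4\}$ the eigenvalue law is a Pfaffian point process (real eigenvalues together with complex-conjugate pairs), and the same computation with de~Bruijn's Pfaffian integration identities in place of Andréief produces a Pfaffian which is a hypergeometric function of a real-symmetric ($\beta=4$) or quaternion self-dual ($\beta=1$) $k\times k$ matrix argument; for $\beta=4$ one interprets $\det(xI_M-A)$ for quaternion $x$ via the standard $2\times2$ complex representation, and since $x$ is real when $\beta=1$ every $|x|^2$ becomes $x^2$.

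The theorem is then the Euler--Herz integral representation of these matrix hypergeometric functions, of the form
\[
{}_2F_1(-M,b;b+k;X)=\frac{\Gamma_k(b+k)}{\Gamma_k(b)\,\Gamma_k(k)}\int_{0<T<I_k}(\det T)^{\,b-\varepsilon k}\,\det(I_k-XT)^{M}\,dT,
\]
with the multivariate $\Gamma_k$ and shift $\varepsilon$ appropriate to the symmetry class, specialised to $X=(1-|x|^2)I_k$ and $b-\varepsilon k=N-M$. Passing to the eigenvalues of $T$ by the Weyl integration formula on the relevant matrix space introduces exactly the Jacobian $|\Delta(\vec t)|^{\beta'}$ with $\beta'=4/\beta$, the weight $\prod_i t_i^{N-M}$ and the factor $\prod_i(1+(|x|^2-1)t_i)^M$, which is \eqref{truncation_characteristic_poly} up to a multiplicative constant independent of $x$. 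To identify that constant as $S_{k,N}(\beta)^{-1}$ I would let $|x|\to\infty$: the left-hand side behaves like $|x|^{2Mk}$, while the right-hand side behaves like $|x|^{2Mk}(\mathrm{const})\int_{[0,1]^k}\prod_i t_i^{N}\,|\Delta(\vec t)|^{\beta'}\prod_i dt_i$, and this remaining integral is precisely the Selberg integral with parameters $(N+1,\,1,\,2/\beta)$, whose evaluation equals exactly $S_{k,N}(\beta)$ as written in \eqref{skn}; this forces the normalisation. Equivalently, the constant can be read off by comparing the $k\times k$ determinant/Pfaffian with the Andréief/de~Bruijn expansion of the right-hand side of \eqref{truncation_characteristic_poly}.

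The main obstacle is the reduction step for $\beta\in\{1,4\}$: assembling the correct Pfaffian from the (somewhat cumbersome) known eigenvalue densities of truncated orthogonal and symplectic matrices, performing the de~Bruijn integration and the Pfaffian analogue of the Cauchy--Binet collapse while correctly tracking the skew-symmetric structure, and, in the symplectic case, handling the quaternionic characteristic polynomial. By contrast, once the matrix hypergeometric representation is available for each $\beta$, the passage to \eqref{truncation_characteristic_poly} via the Euler--Herz formula and the Weyl integration formula, together with the determination of the constant via the Selberg integral, is essentially bookkeeping.
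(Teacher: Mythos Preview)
Your proposal and the paper's argument are genuinely different routes. The paper never touches the eigenvalue density of $A$ itself, nor any determinantal or Pfaffian structure; instead it expands $\det(I_M-x^{-1}A)^{-a}$ in Schur polynomials, pushes each Schur average over $A$ to a Jack polynomial average over $AA^\dagger$ via a zonal-spherical-function identity (Lemma~\ref{theorem_Forrester_Rains}), and evaluates the latter by the Kaneko--Kadell integral for the Jacobi $\beta$-ensemble. Resummation produces a ${}_2F_1^{(2/\beta)}$ with an $M\times M$ matrix argument, and Kaneko's $k$-fold integral representation (after one Kummer transformation) yields \eqref{truncation_characteristic_poly} directly, with the $|\Delta(\vec t)|^{\beta'}$ emerging from Kaneko's formula rather than from a Weyl Jacobian on a dual $k\times k$ matrix space. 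The constant $S_{k,N}(\beta)$ is read off from Kaneko's formula itself (it is the Selberg constant in his representation), with no asymptotic matching needed.

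For $\beta=2$ your approach is viable; it is essentially the Fyodorov--Khoruzhenko route the paper cites as giving \eqref{truncation_characteristic_poly} in that case. The gap in your proposal is exactly the step you flag as the main obstacle: for $\beta\in\{1,4\}$ the assertion that ``the same computation with de~Bruijn's Pfaffian integration identities \dots\ produces a Pfaffian which is a hypergeometric function of a real-symmetric ($\beta=4$) or quaternion self-dual ($\beta=1$) $k\times k$ matrix argument'' is the heart of the matter and is not a routine adaptation. The eigenvalue law of a truncated orthogonal matrix, for instance, mixes real eigenvalues with complex-conjugate pairs and varies with the parity of the number of real eigenvalues; performing the de~Bruijn integration, executing a Cauchy--Binet-type collapse that respects the skew-symmetric structure, and then \emph{recognising} the resulting $k\times k$ Pfaffian as a matrix-argument ${}_2F_1$ on the quaternion self-dual cone is a substantial identity in its own right, not bookkeeping. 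The paper's Jack-polynomial route sidesteps this entirely: it works with $AA^\dagger$, whose eigenvalue density is a single Jacobi $\beta$-ensemble for all three symmetry classes, so the argument is literally the same computation for $\beta=1,2,4$ and the $\beta\leftrightarrow\beta'$ duality appears transparently through the parameter $\alpha=2/\beta$ in the Jack and Kaneko machinery. If you want to pursue your route for $\beta\neq2$, that Pfaffian-to-${}_2F_1$ identification is what you would actually have to prove.
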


The integral representation \eqref{truncation_characteristic_poly} may be regarded as a duality that replaces the initial average over $M \times M$ matrices with one over a fixed dimensionality $k$, with $M$ appearing only as a parameter. Hence the roles of $M$ and $k$ are interchanged. A second aspect of this duality occurs between the real and symplectic ensembles, as is indicated by the interchanged power of repulsion $\beta'$. Such dualities have been observed in several settings in random matrix theory, most frequently in the case of Hermitian or Circular ensembles \cite{Brezin:2000uh,Brezin_2001aa_Real, FK04,FS09, Desrosiers:2009wu,Desrosiers:2014aa,DL15}. For non-Hermitian ensembles, related dualities are known in the complex case \cite{AV03,Fyodorov:2007un, Deano_Simm_2020}, especially for the complex Ginibre ensemble \cite{Nishigaki_2002, Kan05}.

The proof of Theorem \ref{theorem_truncated_duality_integrals_sigma} is based on the theory of hypergeometric functions of matrix argument. Forrester and Rains \cite{Forrester_2009} applied these techniques to obtain moments of characteristic polynomials of the Ginibre ensembles; here we follow a similar approach in the present context of the truncations $A$ defined in \eqref{uconstruct}. The starting point is a generalised binomial expansion for powers of the characteristic polynomial appearing in the averages \eqref{R_definition}; these expansions are given as sums over partitions involving Schur polynomials. One is then tasked with evaluating Schur polynomial averages in the eigenvalues of the random truncations. By combining results of \cite{Forrester_2006, Forrester_2009} with knowledge of the Kaneko-Kadell integral for the average of a Jack function \cite{Kaneko_1993, Kadell1997TheSS}, we recognise the remaining sum over partitions as a $_2F_1$ hypergeometric function of matrix argument. The same approach also applies to a more general average where the matrix $A$ in definition \eqref{R_definition} is replaced with $AV$ for some deterministic $M \times M$ matrix $V$, see Theorems \ref{theorem_average_as_2F1} and \ref{propv}. We then arrive at Theorem \ref{theorem_truncated_duality_integrals_sigma} as a particular case by exploiting an integral representation of $_2F_1$ due to Kaneko \cite{Kaneko_1993}. 

If we evaluate the hypergeometric function at special values we obtain other types of exact identities, one of which we highlight below.\begin{theorem}
\label{prop:introbndry}
Let $\beta=2$, so that $A$ is an $M \times M$ truncation of an $N \times N$ Haar distributed unitary matrix. On the boundary $x = e^{i\theta}$ we have the following exact evaluation
\begin{equation}
\mathbb{E}
		\left [
			|\det ( e^{i\theta}I_{M} - A)|^{\gamma}		\right ] = \prod_{j=N-M+1}^{N}
	\frac{\Gamma(j) \Gamma(j+\gamma)}{ \left ( \Gamma(j + \frac{\gamma}{2}) \right )^2}, \label{bndrymoms}
\end{equation}
valid for $\mathrm{Re}(\gamma) > -1$.
\end{theorem}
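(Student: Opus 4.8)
The plan is to obtain \eqref{bndrymoms} as the $x = e^{i\theta}$ specialization of the integral representation in Theorem~\ref{theorem_truncated_duality_integrals_sigma}, allowing non-integer $k = \gamma/2$. Setting $\beta = 2$ so that $\beta' = 2$, and putting $|x|^2 = 1$ in \eqref{truncation_characteristic_poly}, the factor $\left(1 + (|x|^2-1)t_i\right)^M$ collapses to $1$, and the average reduces to
\begin{equation}
\mathbb{E}\left[|\det(e^{i\theta}I_M - A)|^{\gamma}\right] = \frac{1}{S_{k,N}(2)}\int_{[0,1]^k}\prod_{i=1}^k dt_i\, t_i^{N-M}\,|\Delta(\vec t)|^{2}, \label{eq:selberg-step}
\end{equation}
which is a Selberg integral. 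The first step is therefore to justify that \eqref{truncation_characteristic_poly}, or rather the more general $_2F_1$ identity of Theorems~\ref{theorem_average_as_2F1}--\ref{propv} from which it descends, continues to hold for complex $\gamma$ with $\mathrm{Re}(\gamma) > -1$: on the boundary the hypergeometric series terminates trivially (its argument vanishes), so the only analytic continuation needed is in the Selberg-type integral on the right, which converges precisely when $\mathrm{Re}(N-M) > -1$ — automatic here — and when $\mathrm{Re}(\gamma) > -1$ is needed to control the endpoint $t_i \to 0$ after accounting for the Vandermonde; I would make this rigorous by noting both sides are analytic in $\gamma$ on $\mathrm{Re}(\gamma) > -1$ and agree for $\gamma \in 2\mathbb{N}$, then invoke Carlson's theorem or a direct continuation argument.

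The second step is the explicit evaluation of \eqref{eq:selberg-step}. With $k = \gamma/2$, parameter $a = N-M+1$, $b=1$ (trivial), and Selberg exponent $\lambda = 1$, the Selberg integral gives
\begin{equation}
\int_{[0,1]^k}\prod_{i=1}^k dt_i\, t_i^{N-M}\,\prod_{i<j}(t_j - t_i)^2 = \prod_{j=0}^{k-1}\frac{\Gamma(N-M+1+j)\,\Gamma(2+j)\,\Gamma(1+j)}{\Gamma(N-M+2+k+j-1)}\cdot\frac{1}{\Gamma(2)}, \label{eq:selberg-eval}
\end{equation}
though I would simply read off the standard Selberg formula rather than rederive it. Dividing by $S_{k,N}(2)$ from \eqref{skn} — whose $\beta=2$ form is $S_{k,N}(2) = \prod_{j=0}^{k-1}\frac{\Gamma(N+1+j)\Gamma(1+j)\Gamma(1+(j+1))}{\Gamma(N+2+(k+j-1))\Gamma(2)}$ — a pleasant cancellation of the $\Gamma(2+j)$, $\Gamma(1+j)$ and $\Gamma(N+2+k+j-1)$ factors occurs, leaving a ratio of Gamma functions depending only on $N$, $M$ and $\gamma$. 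The third step is bookkeeping: after reindexing $j \mapsto N - j$ or similar, the surviving product should telescope/reorganize into $\prod_{j=N-M+1}^{N}\Gamma(j)\Gamma(j+\gamma)\big/\Gamma(j+\tfrac{\gamma}{2})^2$; I would verify this by checking the Gamma arguments match after writing $k = \gamma/2$, and sanity-check at $M=0$ (empty product, value $1$) and $\gamma = 2$ against the known first moment.

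The main obstacle is the analytic continuation in $\gamma$ in the first step: Theorem~\ref{theorem_truncated_duality_integrals_sigma} as stated is proved for $k \in \mathbb{N}$, and one must be careful that the chain of identities through the matrix $_2F_1$ (the generalised binomial expansion, the Kaneko--Kadell integral, Kaneko's integral representation) either extends directly to non-integer $k$ or can be bypassed by continuing only the final Selberg-integral formula. The cleanest route, which I would adopt, is to avoid reproving the duality for non-integer $k$ altogether: directly establish \eqref{bndrymoms} for $\gamma \in 2\mathbb{N}$ using \eqref{eq:selberg-step}--\eqref{eq:selberg-eval}, observe that the left side is analytic in $\gamma$ on $\mathrm{Re}(\gamma) > -1$ (it is an average of $|\det(e^{i\theta}I_M - A)|^\gamma$ over a compact ensemble with eigenvalues bounded by $1$, so the integrand has controlled behaviour), that the right side is manifestly analytic and of admissible growth there, and conclude equality on the whole half-plane by Carlson's theorem. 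A secondary, purely mechanical difficulty is making sure the Selberg normalization constants are matched with the precise convention in \eqref{skn}; this is routine but error-prone, and the $M=0$ and $\gamma=2$ checks guard against sign or shift mistakes.
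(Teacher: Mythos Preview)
Your approach is correct but differs from the paper's in a key respect. The paper does \emph{not} pass through the Selberg-integral duality of Theorem~\ref{theorem_truncated_duality_integrals_sigma} at all; instead it applies the more general $_2F_1$ identity of Theorem~\ref{theorem_average_as_2F1}, which already holds for arbitrary $\gamma$ with $\mathrm{Re}(\gamma)>-1$, and then invokes the generalised Gauss summation
\[
_2F_1^{(1/\alpha)}(a,b;c;I_M)=\prod_{j=0}^{M-1}\frac{\Gamma(c-\alpha j)\,\Gamma(c-a-b-\alpha j)}{\Gamma(c-a-\alpha j)\,\Gamma(c-b-\alpha j)}
\]
at $V=\Sigma=I_M$, $x=e^{i\theta}$. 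This gives the product form \eqref{bndrymoms} in one line, with no analytic continuation required: the point is that the hypergeometric step (binomial expansion plus Kaneko--Kadell) is valid for general $\gamma$, whereas Kaneko's \emph{integral} representation \eqref{Kaneko_2F1_integral_representation}, which is what produces the $k$-dimensional Selberg integral you use, genuinely needs $k\in\mathbb{N}$. Your route---specialise the duality to $|x|^2=1$, evaluate the resulting Selberg integral, then continue in $\gamma$ via Carlson's theorem---works and has the virtue of being self-contained once Theorem~\ref{theorem_truncated_duality_integrals_sigma} is in hand, but it costs you the growth estimates needed for Carlson (which you correctly flag as the main obstacle) and the product reindexing. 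The paper's approach buys directness and immediate validity for complex $\gamma$; yours buys independence from the Gauss summation identity. The paper also records, in its Appendix, a third and entirely different probabilistic proof via a recursive decomposition of Haar measure in the style of Bourgade--Hughes--Nikeghbali--Yor.
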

In the particular case $M=N$ the identity \eqref{bndrymoms} specialises to the explicit $\mathrm{U(N)}$ moments obtained by Keating and Snaith and famously used to conjecture moments of the Riemann zeta function \cite{Keating_Snaith_2000}. Thus Theorem \ref{prop:introbndry} provides a sub-unitary $1$-parameter extension of such moments. In Corollary \ref{bndrycor} we obtain similar results for $\beta=1$ and $\beta=4$; we will show that these are related to another investigation of Keating and Snaith \cite{Keating:2000ub} concerning the classical groups $\mathrm{SO(2N)}$ and $\mathrm{Sp(2N)}$ associated with other symmetry classes of $L$-functions.

We now turn to the asymptotic analysis of the moments for large matrix size. This depends on the relative growth of the dimensions $M$ and $N$ entering the definition of $A$ in \eqref{uconstruct}. We will consider the limiting regimes discussed in the works \cite{Zyczkowski:2000ue,Khoruzhenko_2010}, see also the survey \cite{FS03}. The regime of \textit{strong non-unitarity} is defined by the assumption that $M$ and $N$ are proportional: specifically we assume that $N \equiv N_{M}$ is a sequence of positive integers such that $N_{M} \to \infty$ as $M \to \infty$ with the ratio $\mu := \frac{M}{N} \to \tilde{\mu} $ as $M \to \infty$, where $\tilde{\mu} \in (0,1)$. In this regime the eigenvalues of the sub-block $A$ are distributed on the disc of radius $\sqrt{\tilde{\mu}}$ \cite{Zyczkowski:2000ue,Khoruzhenko_2010,Meckes:2019vf}. 

To state our result, recall that the \textit{Gaussian $\beta$-Ensemble} denotes $k \times k$ real symmetric ($\beta=1$), complex Hermitian ($\beta=2$) or quaternion self-dual ($\beta=4$) random matrices $H$ with probability density function proportional to $\mathrm{exp}\left(-\frac{1}{2}\mathrm{Tr}(H^{2})\right)$.
\begin{theorem}
\label{theorem_strong_non-unitarity_uniform_asymptotics}
Consider the averages defined by \eqref{R_definition}, with $x$ real if $\beta=1$. Then in the regime of strong non-unitarity with $\mu := \frac{M}{N} \to \tilde{\mu} \in (0,1)$ as $M \to \infty$, the following asymptotics hold uniformly for $x$ varying on the disc of radius $\sqrt{\mu}$,
\begin{equation}
\begin{split}
	R^{(\beta)}_{2k}(x)
	=	
	&
	M^{\frac{k^{2}}{\beta} + \frac{k}{2}(1-\frac{2}{\beta} )}\mu^{Mk}
	\left(\frac{1 - \mu}{1-|x|^{2}}\right)^{Mk(1 - \mu^{-1})}\left ( \frac{\sqrt{1 - \mu}}{1-|x|^2} \right )^{k + \frac{2}{\beta}k(k-1)}\\
	&\times
	(2\pi)^{\frac{k}{2}}\prod_{j=0}^{k-1}
	\frac{1}{\Gamma(1 + \frac{2}{\beta}j)}\,
	\mathbb{P}
	\left ( 
		\lambda_{\mathrm{max}}^{(\mathrm{G\beta'E})} < \sqrt{M} \frac{\mu - |x|^{2}}{\mu \sqrt{1 - \mu}}
	\right )
	(1 + o(1)), \qquad M \to \infty, \label{gbetaeresult}
	\end{split}
\end{equation}
where $\lambda_{\mathrm{max}}^{(\mathrm{G\beta'E})}$ is the largest eigenvalue in the $k \times k$ Gaussian $\beta'$-Ensemble, with $\beta'$ given by \eqref{betap}.
\end{theorem}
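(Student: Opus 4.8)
The plan is to start from the integral representation \eqref{truncation_characteristic_poly} of Theorem~\ref{theorem_truncated_duality_integrals_sigma} and carry out a Laplace (saddle point) analysis of the resulting $k$-fold integral, the only non-standard feature being that the maximizer of the exponent may collide with the endpoint $t_i=1$ of the domain. Writing $N-M=M(\mu^{-1}-1)$, the integrand in \eqref{truncation_characteristic_poly} is $\exp\bigl\{M\sum_{i=1}^{k}g(t_i)\bigr\}\,|\Delta(\vec t)|^{\beta'}$ with
\[
 g(t)=(\mu^{-1}-1)\log t+\log\bigl(1+(|x|^2-1)t\bigr)
\]
(and $|x|^2$ replaced by $x^2$ when $\beta=1$). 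One checks that $g$ is strictly concave on $(0,1)$ with a unique maximizer
\[
 t_*=\frac{1-\mu}{1-|x|^2},\qquad -g''(t_*)=\frac{(1-|x|^2)^2}{\mu^2(1-\mu)},
\]
and that $t_*\in(0,1)$ precisely when $|x|^2<\mu$, with $t_*=1$ on the boundary $|x|^2=\mu$. As $|x|$ sweeps the closed disc of radius $\sqrt\mu$ the maximizer ranges over $[1-\mu,1]$, and it is exactly the possible coalescence $t_*\to1$ that gives rise to the largest-eigenvalue probability in \eqref{gbetaeresult}.

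The key step is the rescaling $t_i=t_*+u_i\big/\sqrt{-M g''(t_*)}$. Its Jacobian contributes $\bigl(\tfrac{\mu\sqrt{1-\mu}}{\sqrt M\,(1-|x|^2)}\bigr)^{k}$ and the Vandermonde contributes $\bigl(\tfrac{\mu\sqrt{1-\mu}}{\sqrt M\,(1-|x|^2)}\bigr)^{\beta'k(k-1)/2}$, while the constraints $t_i\le1$ turn into $u_i\le s$ with $s=\sqrt M\,(\mu-|x|^2)/(\mu\sqrt{1-\mu})$ — precisely the argument in \eqref{gbetaeresult} — and the constraints $t_i\ge0$ impose lower limits that tend to $-\infty$. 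A second order Taylor expansion of $g$ around $t_*$ turns the exponential weight into $e^{Mk\,g(t_*)}\prod_i e^{-u_i^2/2}(1+o(1))$, and the remaining integral converges to the truncated Mehta integral
\[
 \int_{\{u_i<s\ \forall i\}}\prod_{i=1}^{k}e^{-u_i^2/2}\,|\Delta(\vec u)|^{\beta'}\,du_i
 =Z_k^{(\beta')}\,\mathbb P\!\bigl(\lambda_{\mathrm{max}}^{(\mathrm{G\beta'E})}<s\bigr),
\]
where $Z_k^{(\beta')}=(2\pi)^{k/2}\prod_{j=1}^{k}\Gamma(1+\tfrac2\beta j)\big/\Gamma(1+\tfrac2\beta)$ is the normalizing constant of the Gaussian $\beta'$-ensemble eigenvalue density, a special case of Selberg's integral. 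Substituting the value $g(t_*)=(\mu^{-1}-1)\log\tfrac{1-\mu}{1-|x|^2}+\log\mu$ and collecting the powers of $M$ and $\mu$ reproduces the algebraic prefactors on the first line of \eqref{gbetaeresult}; the constant $S_{k,N}(\beta)$ is handled by Stirling's formula, which gives $S_{k,N}(\beta)\sim C_{k,\beta}\,N^{-k-\frac2\beta k(k-1)}$ with $C_{k,\beta}$ an explicit product of Gamma values (following from $\Gamma(N+a)/\Gamma(N+b)\sim N^{a-b}$) and $N=M/\mu$, so that after a short telescoping cancellation one finds $Z_k^{(\beta')}/C_{k,\beta}=(2\pi)^{k/2}\prod_{j=0}^{k-1}\Gamma(1+\tfrac2\beta j)^{-1}$, as in \eqref{gbetaeresult}.

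The main obstacle — essentially the only point that is not routine — is to upgrade this to an estimate that is \emph{uniform} in $x$ over the disc of radius $\sqrt\mu$, in particular across the transitional regime where $t_*$ is close to $1$ and the Gaussian is effectively cut off at its maximum. I would do this by the usual dissection: restrict the integral to $\{|t_i-t_*|\le\delta_M\}$ with $M^{-1/2}\ll\delta_M\ll M^{-1/3}$, show by concavity of $g$ that the complementary region is smaller than the main term by a relatively exponentially small factor \emph{uniformly} in $x$, and on the retained region control the cubic Taylor remainder of $g$ uniformly, using that $g'''$ is bounded on neighbourhoods of $t_*$ and that $-g''(t_*)$ stays bounded away from $0$ and $\infty$ as $|x|^2$ runs over $[0,\mu]$ with $\mu\to\tilde\mu\in(0,1)$. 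Since $\delta_M\sqrt{-Mg''(t_*)}\to\infty$, the rescaled retained region contains the whole set $\{u_i<s\}$ whenever $s$ stays bounded, so no mass is lost near the boundary of the disc; and since $s\mapsto\mathbb P(\lambda_{\mathrm{max}}^{(\mathrm{G\beta'E})}<s)$ is continuous and tends to $1$ as $s\to+\infty$ — which recovers the ordinary interior-saddle Laplace formula in the bulk — the bulk and transitional regimes match continuously. The case $\beta=1$ with $x$ real, and the variation of $x$ across the disc, are absorbed into the same estimates after replacing $|x|^2$ by $x^2$.
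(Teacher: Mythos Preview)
Your proposal is correct and follows essentially the same approach as the paper: apply Laplace's method to the duality integral \eqref{truncation_characteristic_poly}, identify the maximizer $t_*=(1-\mu)/(1-|x|^2)$ and its possible coalescence with the endpoint $t=1$, rescale to a Gaussian quadratic form truncated at $s=\sqrt{M}(\mu-|x|^2)/(\mu\sqrt{1-\mu})$, and recognise the result as the $\mathrm{G\beta'E}$ largest-eigenvalue distribution. The paper's write-up is slightly terser on the uniformity --- it splits into the two cases $t_*>1-\epsilon$ and $t_*<1-\epsilon$ rather than introducing an $M$-dependent window $\delta_M$ --- but the content is the same.
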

We point out that the $\mathrm{G\beta' E}$ probability in \eqref{gbetaeresult} is primarily there to describe the boundary asymptotics of $R^{(\beta)}_{2k}(x)$ characterised by values of $|x|$ in a small vicinity of $\sqrt{\mu}$, and gives a non-trivial contribution to the constant term of the asymptotics only within a $M^{-\frac{1}{2}}$ neighbourhood of this point. For values of $x$ strictly inside the disc of radius $\sqrt{\mu}$ this probability can be replaced by $1$ without affecting the asymptotic expansion. 

In contrast, the \textit{regime of weak non-unitarity} is characterised by instead fixing $\kappa = N-M$ so that only a finite number of rows and columns are truncated in the construction \eqref{uconstruct}. In this regime the eigenvalues of the sub-block $A$ lie within a $O(M^{-1})$ neighbourhood of the unit circle. Now a different classical ensemble of Hermitian matrices arises in the asymptotics, the \textit{Laguerre $\beta$-Ensemble}. These are $k \times k$ real symmetric ($\beta=1$), complex Hermitian ($\beta=2$) or quaternion self-dual ($\beta=4$) positive definite random matrices $H$ with probability measure proportional to $|\det(H)|^{\alpha}\mathrm{exp}\left(-\mathrm{Tr}(H)\right)$ with a parameter $\alpha>-1$. They are also commonly referred to as \textit{Wishart matrices}, see \cite{Muirhead_Aspects, Forrester_2010_Log} for further background.
\begin{theorem} \label{theorem_dulaity_integrals_asymptotics}
Consider the averages defined by \eqref{R_definition} with $x$ real if $\beta=1$ and set $|x|^{2}=1-\frac{2u}{M}$ with $u>0$ fixed. Then for $\kappa = N - M$ fixed we have
	\begin{equation}
	\begin{split}
		R^{(\beta)}_{2k}(x)
		=
		\left(\frac{M}{2u}\right)^{\frac{2}{\beta}k^2 + k(1 - \frac{2}{\beta})}
	&\left (
	\prod^{k-1}_{j=0}
	\frac{\Gamma(\kappa + 1 +\frac{2}{\beta}j)}{\Gamma(1+\frac{2}{\beta}j)}
	\right )\\
	&\times
	(2u)^{- k\kappa}
	\mathbb{P}
		\left ( 
			\lambda_{\mathrm{max}}^{(\mathrm{L\beta'E}_{\kappa})} < 2 u
		\right )
	(1 + o(1)), \qquad M \to \infty, \label{lbetaeresult}
	\end{split}
	\end{equation}
	where $\lambda_{\mathrm{max}}^{(\mathrm{L\beta'E}_{\kappa})}$ is the largest eigenvalue in the $k \times k$ Laguerre $\beta'$-Ensemble with parameter $\kappa$, with $\beta'$ given by \eqref{betap}.
\end{theorem}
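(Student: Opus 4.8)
The plan is to feed the scaling into the integral representation \eqref{truncation_characteristic_poly} of Theorem~\ref{theorem_truncated_duality_integrals_sigma} and then pass to the limit $M\to\infty$ directly inside the integral: the power of $M$ in \eqref{lbetaeresult} will come entirely from the normalisation $S_{k,N}(\beta)$, and the power of $2u$ from a rescaling of variables. Setting $\kappa=N-M$ and $|x|^{2}-1=-\tfrac{2u}{M}$ (and $x^{2}-1=-\tfrac{2u}{M}$ when $\beta=1$), \eqref{truncation_characteristic_poly} reads
\begin{equation*}
R^{(\beta)}_{2k}(x)=\frac{1}{S_{k,N}(\beta)}\int_{[0,1]^{k}}\prod_{i=1}^{k}dt_i\,t_i^{\kappa}\Bigl(1-\tfrac{2u}{M}t_i\Bigr)^{M}|\Delta(\vec t)|^{\beta'}.
\end{equation*}
First I would substitute $s_i=2u\,t_i$: the domain becomes the compact cube $[0,2u]^{k}$, the factor becomes $(1-\tfrac{s_i}{M})^{M}$, and from $\prod_i t_i^{\kappa}$, $|\Delta(\vec t)|^{\beta'}=(2u)^{-\beta'\binom{k}{2}}|\Delta(\vec s)|^{\beta'}$ and $\prod_i dt_i$ one extracts the overall factor $(2u)^{-k\kappa-\beta'\binom{k}{2}-k}=(2u)^{-k\kappa}\,(2u)^{-(\frac{2}{\beta}k^{2}+k(1-\frac{2}{\beta}))}$.

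The analytic heart of the proof is then the passage $M\to\infty$ under the integral sign over $[0,2u]^{k}$. Since $(1-\tfrac{s}{M})^{M}\le e^{-s}$ for $0\le s\le M$ and $(1-\tfrac{s}{M})^{M}\to e^{-s}$ pointwise, dominated convergence on this bounded domain gives
\begin{equation*}
\int_{[0,2u]^{k}}\prod_{i=1}^{k}ds_i\,s_i^{\kappa}\Bigl(1-\tfrac{s_i}{M}\Bigr)^{M}|\Delta(\vec s)|^{\beta'}\;\longrightarrow\;\int_{[0,2u]^{k}}\prod_{i=1}^{k}ds_i\,s_i^{\kappa}e^{-s_i}|\Delta(\vec s)|^{\beta'}.
\end{equation*}
I would then recognise the right-hand side as $Z_{k,\kappa}(\beta')\,\mathbb{P}\bigl(\lambda_{\mathrm{max}}^{(\mathrm{L\beta'E}_{\kappa})}<2u\bigr)$, where $Z_{k,\kappa}(\beta')$ is the total mass of the $k\times k$ Laguerre $\beta'$-ensemble with parameter $\alpha=\kappa$ --- whose eigenvalue density is proportional to $\prod_i\lambda_i^{\kappa}e^{-\lambda_i}|\Delta(\vec\lambda)|^{\beta'}$ --- and the Laguerre limit of the Selberg integral supplies the closed form $Z_{k,\kappa}(\beta')=\prod_{j=0}^{k-1}\Gamma(\kappa+1+\tfrac{2}{\beta}j)\,\Gamma(1+(j+1)\tfrac{2}{\beta})/\Gamma(1+\tfrac{2}{\beta})$.

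For the prefactor I would expand $S_{k,N}(\beta)$ in \eqref{skn} as $N=M+\kappa\to\infty$. Each ratio $\Gamma(N+1+\tfrac{2}{\beta}j)/\Gamma(N+2+(k+j-1)\tfrac{2}{\beta})$ equals $N^{-1-(k-1)\frac{2}{\beta}}(1+o(1))=M^{-1-(k-1)\frac{2}{\beta}}(1+o(1))$, whence
\begin{equation*}
S_{k,N}(\beta)=M^{-(\frac{2}{\beta}k^{2}+k(1-\frac{2}{\beta}))}\,\prod_{j=0}^{k-1}\frac{\Gamma(1+\frac{2}{\beta}j)\,\Gamma(1+(j+1)\frac{2}{\beta})}{\Gamma(1+\frac{2}{\beta})}\,(1+o(1)).
\end{equation*}
Multiplying the three pieces --- $1/S_{k,N}(\beta)$, the $(2u)$-powers from the change of variables, and the limiting integral --- yields the announced power $(M/2u)^{\frac{2}{\beta}k^{2}+k(1-\frac{2}{\beta})}$, the factor $(2u)^{-k\kappa}$, and the probability $\mathbb{P}(\lambda_{\mathrm{max}}^{(\mathrm{L\beta'E}_{\kappa})}<2u)$, while the leftover products of Gamma functions collapse, via
\begin{equation*}
\frac{\Gamma(1+\frac{2}{\beta})}{\Gamma(1+\frac{2}{\beta}j)\,\Gamma(1+(j+1)\frac{2}{\beta})}\cdot\frac{\Gamma(\kappa+1+\frac{2}{\beta}j)\,\Gamma(1+(j+1)\frac{2}{\beta})}{\Gamma(1+\frac{2}{\beta})}=\frac{\Gamma(\kappa+1+\frac{2}{\beta}j)}{\Gamma(1+\frac{2}{\beta}j)},
\end{equation*}
to the constant $\prod_{j=0}^{k-1}\Gamma(\kappa+1+\tfrac{2}{\beta}j)/\Gamma(1+\tfrac{2}{\beta}j)$ in \eqref{lbetaeresult}.

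The argument is essentially routine once \eqref{truncation_characteristic_poly} is in hand, and I expect no serious obstacle; the only points needing care are (i) the exchange of limit and integral --- immediate here, because after rescaling the domain is compact and the integrand is dominated by the integrable function $\prod_i s_i^{\kappa}e^{-s_i}|\Delta(\vec s)|^{\beta'}$ --- and (ii) the Gamma-function bookkeeping, in particular matching the Selberg evaluation of the Laguerre normalisation against the large-$N$ expansion of $S_{k,N}(\beta)$. As \eqref{lbetaeresult} asserts only a pointwise statement in $u$, no uniformity is required, and $\mathbb{P}(\lambda_{\mathrm{max}}^{(\mathrm{L\beta'E}_{\kappa})}<2u)>0$ ensures the leading term is non-degenerate.
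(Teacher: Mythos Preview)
Your proposal is correct and follows essentially the same approach as the paper: substitute the scaling into the duality integral \eqref{truncation_characteristic_poly}, rescale $t_i\mapsto t_i/(2u)$, pass to the limit $(1-\tfrac{2ut_i}{M})^M\to e^{-2ut_i}$, recognise the resulting integral as the Laguerre $\beta'$-ensemble largest-eigenvalue distribution, and expand $S_{k,N}(\beta)$ via Stirling. If anything, your write-up is slightly more careful than the paper's, in that you make the dominated-convergence justification explicit and spell out the Gamma-function cancellations.
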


\begin{remark}
In the case $\beta' = \beta =2$, the identity \eqref{truncation_characteristic_poly} is a particular case of \cite[Theorem 1]{Fyodorov:2007un}, see also \cite{Deano_Simm_2020} where the $\beta=2$ asymptotics of Theorem \ref{theorem_dulaity_integrals_asymptotics} were obtained. For $\beta=2$ our approach is comparable to the one of \cite{Fyodorov:2007un}, though here we extend the duality to the cases $\beta \in \{1,4\}$ and also obtain non-integer moments of the characteristic polynomial, see the later identities \eqref{char_pol_pre_integral} and \eqref{char_pol_integral}. To our knowledge, the asymptotics of Theorem \ref{theorem_strong_non-unitarity_uniform_asymptotics} are new for all three $\beta \in \{1,2,4\}$, although there exist analogues for the Ginibre ensembles, for $\beta=2$ in \cite{webb_wong_2019,Deano_Simm_2020}, and for $\beta=1$ in \cite{Wang:2021tg}.
\end{remark} 
\begin{remark}
In \cite{Deano_Simm_2020} it was pointed out that for $\beta=2$ the distribution functions arising in \eqref{gbetaeresult} and \eqref{lbetaeresult} are expressible in terms of Painlev\'e transcendents; solutions of certain integrable non-linear second order differential equations. In the cases $\beta \in \{1,4\}$ there is also a connection to integrable systems; the so-called Pfaff-KP hierarchy that can be used to write down non-linear differential-difference equations associated with the distribution functions \cite{AvM01,MS13}. 
\end{remark}
Finally, we present central limit theorems for the log-modulus of the characteristic polynomial of $A$, specialised such that the matrix hypergeometric theory mentioned above gives explicit results. Limit theorems for random determinants have a long history, for a recent account, see the discussion in \cite[Sec. 1.2]{BBK21}. They have also been the subject of considerable recent interest \textit{e.g.} \cite{Bourgade:2018wm, Borgo:2019tt,NV14,Chen:2017uw,BBK21}.  
\begin{theorem} \label{limit_theorem_logdet_at_I}
	Let $A$ be a real ($\beta = 1$), complex ($\beta = 2$) or real quaternion ($\beta = 4$) $M \times M$ truncation of a Haar distributed matrix taken from $\mathrm{O(N)}, \mathrm{U(N)}, \mathrm{Sp(2N)}$. In the weak non-unitarity regime where we fix $\kappa = N - M$ and $\theta \in \mathbb{R}$, we have the convergence in distribution to a standard normal random variable
	\begin{align}
		\frac{\log | \det (e^{i \theta}I_{M}-A)|- e_{\beta} \log M }{\sqrt{v_{\beta}\log M}}	
		\overset{d}{\underset{M\to \infty}{\longrightarrow}}
		\mathcal{N}(0,1)
		, \label{N01}
	\end{align}
	where if $\beta=1$ we assume that $\theta \in \{0,\pi\}$. The coefficients in \eqref{N01} are
	\begin{align}
		&e_{1}
		=
		-
		\frac{1}{2}
		,
		\quad
		e_{2}
		=
		0
		,
		\quad
		e_{4}
		=
		\frac{1}{2}, \label{es}
	\end{align}
	and
\begin{equation}
v_{1} = 1, \quad v_{2}=\frac{1}{2}, \quad v_{4} = 1. \label{vs}
\end{equation}
On the other hand, in the strong non-unitarity regime $\mu := \frac{M}{N} \to \tilde{\mu} $ as $M \to \infty$, we have the convergence in distribution
	\begin{align}
		\log | \det (e^{i \theta}I_{M}-A)|
		\overset{d}{\underset{M\to \infty}{\longrightarrow}}
		\mathcal{N}\left(m_{\tilde{\mu}}, \sigma^{2}_{\tilde{\mu}}\right)	
		, \label{Nmv}
	\end{align}
with mean and variance $m_{\tilde{\mu}} = -e_{\beta}\log(1-\tilde{\mu})$ and $\sigma^{2}_{\tilde{\mu}} = -v_{\beta}\log(1-\tilde{\mu})$.
\end{theorem}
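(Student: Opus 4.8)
The plan is to deduce both central limit theorems from the exact moment formulas already established, namely Theorem~\ref{prop:introbndry} for $\beta=2$ and its counterpart Corollary~\ref{bndrycor} for $\beta\in\{1,4\}$. Setting $L_M:=\log|\det(e^{i\theta}I_M-A)|$ (with $\theta\in\{0,\pi\}$ when $\beta=1$, so that $e^{i\theta}I_M-A$ is a real matrix), these formulas provide the moment generating function
\[
\mathbb{E}\bigl[e^{\gamma L_M}\bigr]=\mathbb{E}\bigl[|\det(e^{i\theta}I_M-A)|^{\gamma}\bigr]
\]
explicitly as a finite product of ratios of Gamma functions indexed by $j\in\{N-M+1,\dots,N\}$, valid for real $\gamma$ in a neighbourhood of the origin. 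Taking logarithms, $\Lambda_M(\gamma):=\log\mathbb{E}[e^{\gamma L_M}]$ is a finite sum of $\log\Gamma$-terms, real-analytic near $\gamma=0$ with $\Lambda_M(0)=0$, and its Taylor coefficients at the origin are the cumulants $\kappa_p^{(M)}$ of $L_M$; in particular $\kappa_1^{(M)}=\mathbb{E}[L_M]$ and $\kappa_2^{(M)}=\mathrm{Var}(L_M)$. The whole argument then reduces to the asymptotics of $\Lambda_M$, which I would extract by inserting the expansion $\log\Gamma(x+\varepsilon)=\log\Gamma(x)+\varepsilon\psi(x)+\tfrac{\varepsilon^2}{2}\psi'(x)+O(\varepsilon^3\sup|\psi''|)$ into the product and using $\psi(x)=\log x+O(x^{-1})$, $\psi^{(p)}(x)=O(x^{-p})$ for $p\ge1$, together with the elementary estimates $\sum_{a\le j\le b}j^{-1}=\log(b/a)+O(1)$ and $\sum_{j\ge a}j^{-p}=O(a^{1-p})$ for $p\ge2$.

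In the weak non-unitarity regime the index $j$ runs over $\{\kappa+1,\dots,M+\kappa\}$, an interval of length $M$ whose left endpoint is $O(1)$ and whose right endpoint is $M+O(1)$; hence each trigamma sum $\sum_j\psi'(j)$ equals $\log M+O(1)$, while sums of $\psi''$ or higher derivatives are $O(1)$. A direct computation with the explicit product then gives $\kappa_1^{(M)}=e_\beta\log M+O(1)$ and $\kappa_2^{(M)}=v_\beta\log M+O(1)$ with the constants of \eqref{es} and \eqref{vs}: when $\beta=2$ the first-order term cancels identically, since $\log\Gamma(j+\gamma)-2\log\Gamma(j+\tfrac{\gamma}{2})+\log\Gamma(j)$ has no $\psi(j)$ contribution and a $\psi'(j)$-coefficient equal to $\tfrac{\gamma^2}{4}$; whereas for $\beta\in\{1,4\}$ the Gamma ratios of Corollary~\ref{bndrycor} fail to cancel at first order, leaving a surviving term whose sign yields $e_1=-\tfrac12$ and $e_4=+\tfrac12$. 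Moreover $\sup_{|\gamma|\le\delta}|\Lambda_M'''(\gamma)|=O(1)$, as the third derivative is a sum of summable $\psi''$-terms. Writing $X_M$ for the left-hand side of \eqref{N01} and substituting $\gamma\mapsto t/\sqrt{v_\beta\log M}$ into the second-order Taylor expansion of $\Lambda_M$ at the origin (with cubic remainder) gives
\[
\log\mathbb{E}\bigl[e^{tX_M}\bigr]=-\frac{t\,e_\beta\log M}{\sqrt{v_\beta\log M}}+\Lambda_M\!\left(\frac{t}{\sqrt{v_\beta\log M}}\right)=\frac{t^2}{2}+O\bigl((\log M)^{-1/2}\bigr)
\]
for every real $t$, so $\mathbb{E}[e^{tX_M}]\longrightarrow e^{t^2/2}$ and, by the continuity theorem for moment generating functions (Curtiss), $X_M\overset{d}{\longrightarrow}\mathcal N(0,1)$, which is \eqref{N01}.

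In the strong non-unitarity regime the index $j$ runs over $\{N-M+1,\dots,N\}$ with every entry of order $N$, so now $\sum_j\psi'(j)=\log\tfrac{N}{N-M}+o(1)\to-\log(1-\tilde\mu)$, while $\sum_j\psi^{(p)}(j)=O\bigl((N-M)^{1-p}\bigr)\to0$ for $p\ge2$ because $N-M=N(1-\mu)\to\infty$. Consequently all cumulants of $L_M$ of order $\ge3$ tend to $0$, whereas $\kappa_1^{(M)}\to m_{\tilde\mu}=-e_\beta\log(1-\tilde\mu)$ and $\kappa_2^{(M)}\to\sigma_{\tilde\mu}^2=-v_\beta\log(1-\tilde\mu)$, the constants $e_\beta$ and $v_\beta$ entering exactly as in the weak regime with $\log M$ replaced by $-\log(1-\tilde\mu)$. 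Since
\[
\Bigl|\Lambda_M(\gamma)-\kappa_1^{(M)}\gamma-\tfrac12\kappa_2^{(M)}\gamma^2\Bigr|\le\frac{|\gamma|^3}{6}\sup_{|s|\le|\gamma|}|\Lambda_M'''(s)|=O\bigl((N-M)^{-1}\bigr),
\]
it follows that $\Lambda_M(\gamma)\to m_{\tilde\mu}\gamma+\tfrac12\sigma_{\tilde\mu}^2\gamma^2$ for each fixed $\gamma$, which is the cumulant generating function of $\mathcal N(m_{\tilde\mu},\sigma_{\tilde\mu}^2)$; the convergence \eqref{Nmv} then follows from the continuity theorem once more.

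The step I expect to be the main obstacle is the Stirling bookkeeping that pins down the exact values of $e_\beta$ and $v_\beta$ from the $\beta\in\{1,4\}$ formulas of Corollary~\ref{bndrycor}: there, unlike for $\beta=2$, the linear-in-$\gamma$ contribution does not vanish, and one must correctly identify which (half-)integer shifts of the Gamma arguments contribute to the $O(\log M)$ terms. A secondary point is to check that those formulas, stated for integer $k$, extend to real $\gamma$ in a fixed neighbourhood of the origin in the orthogonal and symplectic cases --- equivalently, that the small negative moments of $|\det(e^{i\theta}I_M-A)|$ are finite; this is precisely where the hypotheses $M<N$ (keeping the spectrum of $A$ off the unit circle) and, for $\beta=1$, $\theta\in\{0,\pi\}$ enter.
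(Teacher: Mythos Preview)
Your approach is correct and would go through, but it differs from the paper's in one structural respect. The paper does not expand the product of Corollary~\ref{bndrycor} directly via polygamma asymptotics as you propose. Instead it first observes (Lemma~\ref{haardecomp}) that the product over $j=N-M+1,\dots,N$ is literally the \emph{ratio} of the Keating--Snaith products for the full classical groups of sizes $N$ and $N-M$, which yields the equality in distribution
\[
\log|\det(e^{i\theta}I_M-A)|\ \overset{d}{=}\ \log|\det(e^{i\theta}I_N-U_N)|-\log|\det(e^{i\theta}I_{N-M}-V_{N-M})|
\]
with $U_N,V_{N-M}$ independent Haar matrices. The proof then imports the known expansion $\log\phi_N(\gamma)=e_\beta\gamma\log N+\tfrac{v_\beta}{2}\gamma^2\log N+\mathcal{E}_N(\gamma)$ from \cite{Keating_Snaith_2000,Keating:2000ub} and simply subtracts: in the weak regime the second term is $O(1)$ and drops out; in the strong regime the difference of $\log N$ and $\log(N-M)$ produces $-\log(1-\tilde\mu)$, and the bounded remainders $\mathcal{E}_N-\mathcal{E}_{N-M}$ cancel in the limit.

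What this buys the paper is that the entire Stirling/digamma bookkeeping you flag as ``the main obstacle'' --- identifying the half-integer shifts for $\beta\in\{1,4\}$ and extracting the exact constants $e_\beta,v_\beta$ --- is outsourced to the Keating--Snaith computation, which is cited rather than redone. Your route is more self-contained (it never invokes the distributional decomposition and would reproduce the Keating--Snaith constants from scratch), at the cost of that bookkeeping. Your secondary concern about extending the moment formula to real $\gamma$ near $0$ is a genuine point, but the paper finesses it in the same way: once the problem is reduced to $\phi_N(\gamma)/\phi_{N-M}(\gamma)$, the validity for $\gamma$ in a neighbourhood of the origin is inherited from the classical-group results.
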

It is notable above that in the strong non-unitarity regime we obtain convergence to a limiting Gaussian variable with no normalization, while the weak non-unitary regime requires a normalization by the familiar $\sqrt{\log M}$ factor that also appears for random matrices from the classical compact groups \cite{Keating_Snaith_2000, Keating:2000ub}. If we consider the limiting distribution when the argument of the characteristic polynomial is at the origin, rather than on the unit circle, we find that even the limiting Gaussianity is not guaranteed. 
\begin{theorem} \label{theorem_log_determiants_limit_theorem}
	Let $A$ be a real ($\beta = 1$), complex ($\beta = 2$) or real quaternion ($\beta = 4$) $M \times M$ truncation of a Haar distributed matrix taken from $\mathrm{O(N)}, \mathrm{U(N)}, \mathrm{Sp(2N)}$. Then for a fixed $\kappa = N - M$, we have the convergence in distribution
	\begin{align}
		\log | \det (A) |
		+
		\frac{\kappa}{2}	
		\log \frac{\beta}{2} M
		\overset{d}{\underset{M\to \infty}{\longrightarrow}}
		\frac{1}{2}\sum^{\kappa-1}_{j=0} \log \Gamma^{(\beta)}_j \label{gam}
		,
	\end{align}
	where $\Gamma^{(\beta)}_j$ are independent Gamma distributed random variables with parameter $\frac{\beta}{2} + \frac{\beta}{2} j$.
\end{theorem}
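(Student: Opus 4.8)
The plan is to prove \eqref{gam} by the method of moments applied to the Mellin transform of $|\det A|$, equivalently the Laplace transform of $\log|\det A|$. The key input is the evaluation at $x=0$ of the moments in Theorem~\ref{theorem_truncated_duality_integrals_sigma}. Inspection of \eqref{R_definition} shows that $R^{(\beta)}_{2k}(0)=\mathbb{E}\big[|\det A|^{2k}\big]$ for every $\beta\in\{1,2,4\}$ (with $|\det A|$ read as in Theorem~\ref{theorem_truncated_duality_integrals_sigma}, the sign of $\det(-A)$ playing no role in \eqref{R_definition}), and putting $x=0$ in \eqref{truncation_characteristic_poly} collapses $(1+(|x|^2-1)t_i)^M$ to $(1-t_i)^M$, so that the right-hand side is the normalised Selberg integral
\[
R^{(\beta)}_{2k}(0)=\frac{1}{S_{k,N}(\beta)}\int_{[0,1]^k}\prod_{i=1}^{k}\Big(dt_i\,t_i^{N-M}(1-t_i)^{M}\Big)\,|\Delta(\vec t)|^{\beta'}.
\]
Evaluating this by Selberg's formula (with first exponent $\alpha=N-M+1$, second exponent $M+1$, and repulsion $\tfrac{2}{\beta}$) and cancelling against $S_{k,N}(\beta)$, which after using $N=M+\kappa$ produces a large cancellation, one is left with
\[
R^{(\beta)}_{2k}(0)=\prod_{j=0}^{k-1}\frac{\Gamma\!\big(\kappa+1+\tfrac{2}{\beta}j\big)\,\Gamma\!\big(M+1+\tfrac{2}{\beta}j\big)}{\Gamma\!\big(1+\tfrac{2}{\beta}j\big)\,\Gamma\!\big(M+\kappa+1+\tfrac{2}{\beta}j\big)} .
\]
(The same identity follows from the Jacobi $\beta$-ensemble law of the squared singular values of $A$.)

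Next I would take $M\to\infty$ with $\kappa$ and $k$ fixed. From $\Gamma(M+a)/\Gamma(M+b)=M^{a-b}(1+O(M^{-1}))$ one gets $\Gamma(M+1+\tfrac{2}{\beta}j)/\Gamma(M+\kappa+1+\tfrac{2}{\beta}j)=M^{-\kappa}(1+o(1))$, hence
\[
\Big(\tfrac{\beta}{2}M\Big)^{\kappa k}R^{(\beta)}_{2k}(0)\;\xrightarrow{\,M\to\infty\,}\;\Big(\tfrac{\beta}{2}\Big)^{\kappa k}\prod_{j=0}^{k-1}\frac{\Gamma\!\big(\kappa+1+\tfrac{2}{\beta}j\big)}{\Gamma\!\big(1+\tfrac{2}{\beta}j\big)} .
\]
Writing $\Gamma(\kappa+1+\tfrac{2}{\beta}j)/\Gamma(1+\tfrac{2}{\beta}j)=\prod_{i=1}^{\kappa}(i+\tfrac{2}{\beta}j)$, interchanging the two products and resumming over $j$ gives
\[
\Big(\tfrac{\beta}{2}\Big)^{\kappa k}\prod_{i=1}^{\kappa}\prod_{j=0}^{k-1}\Big(i+\tfrac{2}{\beta}j\Big)=\Big(\tfrac{\beta}{2}\Big)^{\kappa k}\prod_{i=1}^{\kappa}\Big(\tfrac{2}{\beta}\Big)^{k}\frac{\Gamma\!\big(\tfrac{i\beta}{2}+k\big)}{\Gamma\!\big(\tfrac{i\beta}{2}\big)}=\prod_{j=0}^{\kappa-1}\frac{\Gamma\!\big(\tfrac{\beta}{2}(1+j)+k\big)}{\Gamma\!\big(\tfrac{\beta}{2}(1+j)\big)} .
\]
By independence of the $\Gamma^{(\beta)}_j$ together with $\mathbb{E}\big[(\Gamma^{(\beta)}_j)^k\big]=\Gamma(\tfrac{\beta}{2}(1+j)+k)/\Gamma(\tfrac{\beta}{2}(1+j))$, the last expression equals $\mathbb{E}\big[\prod_{j=0}^{\kappa-1}(\Gamma^{(\beta)}_j)^{k}\big]$. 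Thus, setting $X_M:=\log|\det A|+\tfrac{\kappa}{2}\log\tfrac{\beta}{2}M$ and $X:=\tfrac12\sum_{j=0}^{\kappa-1}\log\Gamma^{(\beta)}_j$, every integer exponential moment converges: $\mathbb{E}[e^{2kX_M}]\to\mathbb{E}[e^{2kX}]$ as $M\to\infty$.

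To pass from moments to weak convergence I would \emph{not} rely on the integer moments alone --- the law of $e^{2X}$ is a product of $\kappa$ independent Gamma variables, with a stretched-exponential right tail of order $\exp(-c\,y^{1/\kappa})$, and is moment-indeterminate for $\kappa$ large --- but instead establish convergence of the full Mellin transform on an interval around $0$. Running the same computation with a non-integer exponent, via the non-integer moment representations \eqref{char_pol_pre_integral}--\eqref{char_pol_integral} (or, once more, via the Selberg-integral form of the Jacobi-ensemble average of the squared singular values of $A$), gives for all real $s$ in some $(-\delta,\delta)$
\[
\mathbb{E}\big[e^{sX_M}\big]=\Big(\tfrac{\beta}{2}M\Big)^{\kappa s/2}\mathbb{E}\big[|\det A|^{s}\big]\;\xrightarrow{\,M\to\infty\,}\;\prod_{j=0}^{\kappa-1}\frac{\Gamma\!\big(\tfrac{\beta}{2}(1+j)+\tfrac{s}{2}\big)}{\Gamma\!\big(\tfrac{\beta}{2}(1+j)\big)}=\mathbb{E}\big[e^{sX}\big],
\]
the right-hand side being finite on $(-\beta,\infty)\ni 0$. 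Curtiss' theorem (pointwise convergence of moment generating functions on a neighbourhood of the origin implies convergence in distribution) then yields $X_M\overset{d}{\to}X$, which is exactly \eqref{gam}; one could equally run the computation with an imaginary exponent and invoke L\'evy's continuity theorem. The main obstacle is precisely this last step: recognising that the integer moments are not by themselves sufficient, and then establishing convergence of the Mellin (or characteristic) transform with enough uniformity in the Selberg asymptotics. By comparison the exact evaluation at $x=0$ and the Gamma-function rearrangement are routine. A minor additional point, already present throughout the paper, is to fix the meaning of $|\det A|$ for $\beta=4$ via the $2\times2$ complex representation of quaternions.
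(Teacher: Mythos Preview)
Your argument is correct and lands on the same endpoint as the paper, but the route differs in a way worth noting. The paper does not go through Theorem~\ref{theorem_truncated_duality_integrals_sigma} at all: it computes $\mathbb{E}[|\det A|^{\gamma}]$ directly for general $\gamma$ from the Jacobi $\beta$-ensemble law of the eigenvalues of $AA^{\dagger}$ (a ratio of Selberg integrals, Lemma~\ref{lem:momgenfn}), then rearranges the resulting $M$-fold Gamma product into a $\kappa$-fold product and reads it off as the moment generating function of a sum of independent $\log\mathcal{B}_{\frac{\beta}{2}(1+j),\,\frac{\beta}{2}M}$ variables (Lemma~\ref{lem:eqdist}). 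The limit \eqref{gam} is then a one-line consequence of the classical fact that $\tfrac{\beta}{2}M\,\mathcal{B}_{\alpha,\,\frac{\beta}{2}M}\to\Gamma_{\alpha}$ in distribution and in MGF.

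By contrast, you first pass through the duality at $x=0$, which only yields integer exponents $2k$; you then (correctly) observe that the limiting law is moment-indeterminate for large $\kappa$ and switch to the Jacobi-ensemble computation for general $s$ to invoke Curtiss. That second step is exactly the paper's starting point, so your detour via Theorem~\ref{theorem_truncated_duality_integrals_sigma} is ultimately dispensable; and your Gamma-ratio asymptotics together with the product interchange play the role that the Beta$\to$Gamma convergence plays in the paper. The paper's Beta decomposition buys a cleaner probabilistic picture and avoids any explicit rearrangement of Pochhammer products, while your approach has the virtue of showing how the same moments are visible from the duality side. One small caution: for $\beta=4$ the identity $R^{(4)}_{2k}(0)=\mathbb{E}[|\det A|^{2k}]$ depends on the convention for $|\det A|$ in the quaternion case (the $2M\times 2M$ complex determinant versus its square root), which you flag but should pin down before matching exponents.
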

To our knowledge, limit theorems for determinants of random truncations were only considered fairly recently, in the special case of $\log|\det(A)|$ with $\beta=2$ in the regime of strong non-unitarity \cite{Chen:2017uw}. In contrast to \eqref{gam} this regime gives rise to a Gaussian limit, see Proposition \ref{theorem_logdet_snu_limit}. In fact, the latter problem can be easily Hermitised allowing for the application of earlier results on Hermitian random matrices \cite{R07}. In contrast, our limit theorems for $\log|\det(e^{i\theta}I_{M}-A)|$ seem to be less obvious and rely instead on the aforementioned relation to matrix hypergeometric functions.\\

The structure of this paper is as follows. In Section \ref{section_Schur_function_averages} we calculate averages of Schur polynomials in the eigenvalues of the random truncations. This is used to identify the moments of characteristic polynomials in terms of a matrix hypergeometric function, obtaining Theorems \ref{theorem_truncated_duality_integrals_sigma} and \ref{prop:introbndry} as particular cases. In Section \ref{sec:asympt} we turn to the asymptotic analysis and prove Theorems \ref{theorem_strong_non-unitarity_uniform_asymptotics} and \ref{theorem_dulaity_integrals_asymptotics}. In Section \ref{section_limit_theorems} we obtain exact distributional identities for the log-determinants and use these to prove Theorems \ref{limit_theorem_logdet_at_I} and \ref{theorem_log_determiants_limit_theorem}. Finally, we present an Appendix written by Guillaume Dubach giving a direct proof of Theorem \ref{prop:introbndry} and its extension to the orthogonal and symplectic groups, see Corollary \ref{bndrycor}.

\section*{Acknowledgements}
Both authors gratefully acknowledge financial support of the Royal Society, grant URF\textbackslash R1\textbackslash 180707. We would like to thank Emma Bailey, Yan Fyodorov and Jordan Stoyanov for helpful comments on an earlier version of this paper.

\section{Schur polynomial averages and matrix hypergeometric functions}
\label{section_Schur_function_averages}
The goal of this Section is to evaluate the averages \eqref{R_definition} in terms of hypergeometric functions of matrix argument and prove Theorem \ref{theorem_truncated_duality_integrals_sigma}. This requires the use of tools from the theory of symmetric functions and we now introduce the relevant notation, see also \cite{FS09, Forrester_2009}.

A partition of a non-negative integer $k \in \mathbb{N}$ is a weakly decreasing ordered sequence $\nu = (\nu_1,\nu_2,..)$ of non-negative integers with only finitely many non-zero terms such that $k = |\nu| = \sum_i \nu_i$. The number of non-zero terms $\nu_{i}$ is the length of the partition and each non-zero term is called a part. The dominance partial order on the set of partitions is defined in the following way. For partitions $\mu = (\mu_1,\mu_2,\ldots)$ and $\nu = (\nu_1,\nu_2,\ldots)$ we say that $\mu \geq \nu$ if and only if $|\mu| = |\nu|$ and $\mu_{1}+\ldots+\mu_{i} \geq \nu_{1} + \ldots + \nu_{i}$ for all $i \geq 1$. 

A function $f$ of $M$ variables is said to be symmetric if for each $\sigma \in S_M$, the group of permutations on $M$ symbols, we have $f(x_1,\ldots,x_M) = f(x_{\sigma(1)},\ldots,x_{\sigma(M)})$. For a matrix $X$ with eigenvalues $x_{1},\ldots,x_{M}$ we define $f(X) := f(x_1,\ldots,x_{M})$. The monomial symmetric functions are given by the sum of monomials
\begin{equation}
m_{\nu}(X) = \sum_{\sigma}x_{1}^{\nu_{\sigma(1)}}\ldots x_{M}^{\nu_{\sigma(M)}}
\end{equation}
over all distinct permutations of $(\nu_1,\ldots,\nu_{M})$, where $\nu$ has length $\leq M$. A particular kind of symmetric function we will need are those known as the Jack polynomials, which are defined in terms of a partition $\nu$ and a parameter $\alpha>0$. This parameter will be related to the Dyson index $\beta$ of the given matrix ensemble by the relation $\alpha = \frac{2}{\beta}$. The Jack polynomials $P_\nu^{(\alpha)}(X)$ can be defined as eigenfunctions of the differential operator 
\begin{align}
	\sum^M_{j=1}
	\left (
		x_j \frac{\partial}{\partial x_j}
	\right )^2
	+
	\frac{M-1}{\alpha}	
	\sum^M_{j=1}
	x_j \frac{\partial}{\partial x_j}
	+
	\frac{2}{\alpha}
	\sum_{1\leq j < k \leq M}
	\frac{x_j x_k}{x_j - x_k}
	\left (
		\frac{\partial}{\partial x_j}
		-
		\frac{\partial}{\partial x_k}
	\right ), \label{op}
\end{align}
with the additional structure
\begin{equation}
P_{\nu}^{(\alpha)}(X) = m_{\nu}(X)+\sum_{\mu < \nu}a^{(\alpha)}_{\mu\nu}m_{\mu}(X),
\end{equation}
where $m_{\mu}$ are the monomial symmetric functions, $a^{(\alpha)}_{\mu\nu}$ are coefficients that do not depend on $M$ and $\mu < \nu$ is the dominance ordering on partitions, see \cite{Stanley1989SomeCP} for further details. When $\alpha=1$ these are precisely the Schur polynomials,
\begin{equation}
s_\nu(X) =\frac{\det\bigg\{x_j^{\nu_k+M-k}\bigg\}_{j,k=1}^{M}}{\det\bigg\{x_j^{M-k}\bigg\}_{j,k=1}^{M}}.
\end{equation}
It will be convenient to define the Jack polynomials with a slightly different normalization. For this purpose, we introduce the generalised Pochhammer symbol,
\begin{align}
	[u]^{(\alpha)}_\nu	 
	= 
	\prod^M_{j=1}
	\frac{\Gamma(u - (j-1)/ \alpha + \nu_j)}{\Gamma(u - (j-1)/ \alpha )}
	\label{generalised_pochhammer}
\end{align}
and the quantities
\begin{align}
	d'_\nu	
	=
	\frac{\alpha^{|\nu|} [(M-1) / \alpha + 1]^{(\alpha)}_\nu }{ \bar{f}^{1/\alpha}(\nu) }
	,
\end{align}
and
\begin{align}
	\bar{f}^{\> 1/\alpha}(\nu)
	=
	\prod_{1 \leq i < j \leq M}
	\frac{(1+(j-i-1)/\alpha + \nu_i - \nu_j)_{1/\alpha}}{(1+(j-i-1)/\alpha)_{1/\alpha}}	
	,
\end{align}
where $(u)_n = \Gamma(u+n)/\Gamma(u)$ is the classical Pochhammer symbol. Then the normalised Jack polynomials are
\begin{equation}
C_\nu^{(\alpha)}(X) = \frac{\alpha^{|\nu|} |\nu|!}{d'_\nu} P_{\nu}^{(\alpha)}(X). \label{njack}
\end{equation}

For a partition $\nu$ we define $2\nu$ as the partition obtained by doubling each part of $\nu$, and $\nu^{2}$ the partition obtained by repeating each part of $\nu$ twice.
\begin{lemma}
\label{theorem_Forrester_Rains}
	Suppose the distribution of $X$ is invariant under $X\mapsto UX$, $X\mapsto XU$ with $U \in \mathrm{O(N)}, \mathrm{U(N)}$ or $\mathrm{Sp(2N)}$ respectively. Then for real matrices,
\begin{align}
	\mathbb{E}
		\left [
			s_\mu (VX)
		\right ]
	=
	\begin{cases}
		\displaystyle{
		\frac{C_\nu^{(2)}(VV^T)}{\left ( C_\nu^{(2)}(1^N) \right )^2}
		\mathbb{E}
			\left [
				C_\nu^{(2)}(XX^T)
			\right ]
		}
		,  
		&\mbox{if } \mu = 2 \nu   \\ 
		0, &\mbox{otherwise;}
	\end{cases}
\end{align}
for complex matrices,
\begin{align}
	\mathbb{E}
		\left [
			s_\mu (VX)
			s_\nu (X^\dag V^\dag)
		\right ]
	=
		\delta_{\mu, \nu}
		\frac{C_\nu^{(1)}(VV^\dag)}{\left ( C_\nu^{(1)}(1^N) \right )^2}
		\mathbb{E}
			\left [
				C_\nu^{(1)}(XX^\dag)
			\right ]
		,
\end{align}
and for real quaternion matrices,
\begin{align}
	\mathbb{E}
		\left [
			s_\mu (VX)
		\right ]
	=
	\begin{cases}
	\displaystyle{
		\frac{C_\nu^{(1/2)}(VV^\dag)}{\left ( C_\nu^{(1/2)}(1^N) \right )^2}
	\mathbb{E}
		\left [
			C_\nu^{(1/2)}(XX^\dag)
		\right ]
		}
		,  &\mbox{if } \mu = \nu^2 \\ 
		0, &\mbox{otherwise.}
	\end{cases}
\end{align}
\end{lemma}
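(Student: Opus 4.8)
The plan is to reduce the statement to known results about averages of Jack polynomials over the classical compact groups, using the orthogonality/invariance structure of Haar measure together with the branching between Schur functions and Jack functions with parameters $\alpha = 2, 1, 1/2$. First I would recall the relevant representation-theoretic fact: for $U$ Haar-distributed on $\mathrm{O(N)}$, the average $\mathbb{E}[s_\mu(UY)]$ vanishes unless $\mu = 2\nu$ for some partition $\nu$, in which case it is proportional to the zonal polynomial $C_\nu^{(2)}$ evaluated at $YY^T$; the analogous statement for $\mathrm{Sp(2N)}$ involves $C_\nu^{(1/2)}$ and partitions of the form $\nu^2$, while for $\mathrm{U(N)}$ the pairing $\mathbb{E}[s_\mu(UY)\,s_\nu(Y'U^\dag)]$ is diagonal in $(\mu,\nu)$ with value proportional to $C_\nu^{(1)}(YY'^\dag) = s_\nu(YY'^\dag)$. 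These are exactly the statements catalogued in \cite{Forrester_2006, Forrester_2009} (and ultimately go back to the theory of zonal polynomials on symmetric spaces); the selection rules $\mu = 2\nu$ and $\mu = \nu^2$ come from the decomposition of symmetric-space spherical functions.

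The second step is to transfer this from a pure Haar average to the average over $X$ by conditioning. Write $X \overset{d}{=} U_1 X U_2$ for $U_1,U_2$ independent Haar elements, independent of $X$ — this is precisely the bi-invariance hypothesis. Then $\mathbb{E}[s_\mu(VX)] = \mathbb{E}_X \mathbb{E}_{U_1,U_2}[s_\mu(VU_1 X U_2)]$. Applying the $\mathrm{U(N)}$ (resp. $\mathrm{O(N)}$, $\mathrm{Sp(2N)}$) result to the inner average in $U_2$ collapses it to a Jack polynomial in the singular data of $V U_1 X$, and a second application in $U_1$ collapses that to a Jack polynomial of $V V^\dag$ (or $VV^T$) times one of $XX^\dag$ (or $XX^T$); here one uses the multiplicative/reproducing property $C_\nu^{(\alpha)}(Y)\,C_\nu^{(\alpha)}(Z) = C_\nu^{(\alpha)}(1^N)\,C_\nu^{(\alpha)}(YZ)$-type identities, or equivalently the fact that $\int C_\nu^{(\alpha)}(AUBU^{-1})\,dU = C_\nu^{(\alpha)}(A)C_\nu^{(\alpha)}(B)/C_\nu^{(\alpha)}(1^N)$, which is the defining integral formula for these spherical functions. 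Tracking the normalisation constants produces the stated factor $C_\nu^{(\alpha)}(\cdot)/\big(C_\nu^{(\alpha)}(1^N)\big)^2$ in front of $\mathbb{E}[C_\nu^{(\alpha)}(XX^\dag)]$.

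For the complex case the argument is cleanest because everything is expressed through Schur functions: one uses the Cauchy-type identity and the known fact that $\int_{\mathrm{U(N)}} s_\mu(VU) s_\nu(U^\dag W)\, dU = \delta_{\mu\nu}\, s_\nu(VW)/s_\nu(1^N)$, then takes $W = X^\dag V^\dag$ conditionally and applies it once more. The real and quaternion cases require the extra bookkeeping of the doubling maps $\nu \mapsto 2\nu$ and $\nu \mapsto \nu^2$ and the identification $C_\nu^{(2)} = $ zonal polynomial, $C_\nu^{(1/2)} = $ quaternionic zonal polynomial, which is where I expect the main technical friction: one must check that the selection rule coming from the $\mathrm{O(N)}$ (resp. $\mathrm{Sp(2N)}$) integral is compatible under a second averaging, i.e. that no new partitions sneak in and the normalisation $[(N-1)/\alpha + 1]^{(\alpha)}_\nu$-type constants match across the two applications. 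This is bookkeeping rather than a genuine obstacle, and it is precisely the content assembled in \cite{Forrester_2006, Forrester_2009}, so the proof will largely consist of citing those references and verifying that the bi-invariance of $X$ lets us iterate the group integral twice. The one point deserving care is that $N \ge M$ so that all the matrices $VV^\dag$, $XX^\dag$ have at most $M \le N$ nonzero eigenvalues and the Jack polynomials are evaluated in the "stable" range where the $C_\nu^{(\alpha)}(1^N)$ normalisations are nonzero; for partitions of length exceeding $N$ both sides vanish and there is nothing to prove.
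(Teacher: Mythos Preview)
The paper does not give its own proof of this lemma: it is stated as a known result and attributed to Forrester--Rains \cite{Forrester_2009}, with a pointer to Macdonald \cite[Chapter VII]{Macdonald} for the underlying zonal spherical function interpretation. Your proposal is a correct and reasonably detailed reconstruction of exactly that argument, so there is no discrepancy to report.

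A couple of minor comments on your sketch. First, the iterated group integral works just as you describe: averaging $s_\mu(VU_1XU_2)$ over $U_2$ produces (in the orthogonal case, say) $C_\nu^{(2)}\big((VU_1X)^T(VU_1X)\big)/C_\nu^{(2)}(1^N)$ with the selection rule $\mu=2\nu$; then the cyclic invariance of eigenvalues and the spherical function identity $\int C_\nu^{(\alpha)}(U A U^{-1} B)\,dU = C_\nu^{(\alpha)}(A)C_\nu^{(\alpha)}(B)/C_\nu^{(\alpha)}(1^N)$ applied to the $U_1$-average split off $C_\nu^{(2)}(VV^T)$ and $C_\nu^{(2)}(XX^T)$ as required. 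Second, your closing remark about $N\ge M$ is slightly misplaced: in the lemma as stated, $N$ is simply the size of the matrices $X$ and $V$, and when the lemma is later applied to the truncation $A$ one takes $N$ there to be $M$ (the bi-invariance of $A$ being under $\mathrm{O}(M)$, $\mathrm{U}(M)$ or $\mathrm{Sp}(2M)$). The stability issue you mention is real but belongs to the application, not to the lemma itself.
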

The above result was obtained in \cite{Forrester_2009}. Its algebraic origins are described in detail in the book of Macdonald \cite[Chapter VII]{Macdonald} where each result in Lemma \ref{theorem_Forrester_Rains} corresponds to the \textit{zonal spherical function} for an underlying Gelfand pair associated to each classical group. 

\begin{theorem}
	\label{theorem_schur_average_pre_sum}
	Let $A$ be an $M \times M$ truncation of a Haar distributed matrix from classical compact groups $\mathrm{O(N)}, \mathrm{U(N)}$ or $\mathrm{Sp(2N)}$ with $N \geq 2M$. Then for $\beta=1$,
	\begin{align}
	\mathbb{E}
		\left [
			s_\mu (VA)
		\right ]
	=
	\begin{cases}
		\displaystyle{
		\frac{d'_{2\nu}|_{\alpha = 1}}{ \left | \nu \right |! 2^{2\left | \nu \right |}} 
		\frac{1}{\left [ \frac{1}{2} N \right ]^{(2)}_\nu}
		C_\nu^{(2)}(VV^T) 
		}
		,  
		&\mbox{if } \mu = 2 \nu   \\ 
		0, &\mbox{otherwise;}
	\end{cases}
	\label{schur_average_pre_sum_beta_1}
\end{align}
for $\beta=2$,
\begin{align}
	\mathbb{E}
		\left [
			s_\mu (VA)
			s_\nu (A^\dag V^\dag)
		\right ]
	=
		\delta_{\mu, \nu}
		\frac{ \left ( d'_{\nu}|_{\alpha = 1} \right )^2 }{ \left | \nu \right |! } 
		\frac{1}{\left [ N \right ]^{(1)}_\nu}
		C_\nu^{(1)}(VV^\dag)
		,
	\label{schur_average_pre_sum_beta_2}
\end{align}
and for $\beta=4$, with $V$ a real quaternion matrix,
\begin{align}
	\mathbb{E}
		\left [
			s_\mu (VA)
		\right ]
	=
	\begin{cases}
	\displaystyle{
		\frac{d'_{\nu^2}|_{\alpha = 1}}{ \left | \nu \right |! } 
		\frac{1}{\left [ 2 N \right ]^{(1/2)}_\nu} 
		C_\nu^{(1/2)}(VV^\dag)
		}
		,  &\mbox{if } \mu = \nu^2 \\ 
		0, &\mbox{otherwise.}
	\end{cases}
	\label{schur_average_pre_sum_beta_4}
\end{align}
\end{theorem}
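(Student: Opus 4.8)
The plan is to combine Lemma~\ref{theorem_Forrester_Rains} with explicit knowledge of the eigenvalue density of the truncated ensembles. Recall from \cite{Zyczkowski:2000ue,Khoruzhenko_2010} that when $N \geq 2M$ the matrix $AA^{\dagger}$ (respectively $AA^{T}$ for $\beta=1$) has a joint eigenvalue density which, up to normalization, is of Jacobi type: proportional to $\prod_{i}\lambda_{i}^{a}(1-\lambda_{i})^{b}|\Delta(\vec{\lambda})|^{\beta}$ on $[0,1]^{M}$, with the exponents $a,b$ determined by $N$, $M$ and $\beta$. Consequently, the averages $\mathbb{E}[C_{\nu}^{(\alpha)}(AA^{\dagger})]$ appearing on the right-hand sides of Lemma~\ref{theorem_Forrester_Rains} are precisely averages of a single normalized Jack polynomial against a Jacobi $\beta$-ensemble weight, with $\alpha = 2/\beta$. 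First I would write each of the three cases of Lemma~\ref{theorem_Forrester_Rains} with $X = A$, reducing the claimed identities to the computation of these Jack--Jacobi averages.

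The key input is the Kaneko--Kadell integral \cite{Kaneko_1993, Kadell1997TheSS}, which evaluates the average of a Jack polynomial over a Jacobi $\beta$-ensemble in closed form as a ratio of products of generalized Pochhammer symbols and Gamma factors. Applying this with the specific exponents coming from the truncation density yields $\mathbb{E}[C_{\nu}^{(\alpha)}(AA^{\dagger})]$ as an explicit product; dividing by $(C_{\nu}^{(\alpha)}(1^{N}))^{2}$ and multiplying by $C_{\nu}^{(\alpha)}(VV^{\dagger})$ then gives a candidate for the right-hand side. It remains to simplify the resulting product of Gamma functions and Pochhammer symbols into the compact form stated in \eqref{schur_average_pre_sum_beta_1}--\eqref{schur_average_pre_sum_beta_4}. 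Here one uses the known evaluation of $C_{\nu}^{(\alpha)}(1^{N})$ in terms of $[N/\alpha \cdot (\text{something})]_{\nu}^{(\alpha)}$ and the hook-length-type quantities $d'_{\nu}$, $\bar f^{1/\alpha}(\nu)$ defined above; the factors of $2$ and the appearance of $d'_{2\nu}|_{\alpha=1}$ in the $\beta=1$ case (and $d'_{\nu^{2}}|_{\alpha=1}$ for $\beta=4$) arise from the $\mu = 2\nu$ and $\mu = \nu^{2}$ constraints, via standard identities relating $C_{2\nu}^{(2)}$ to $C_{\nu}^{(1)}$ (respectively $C_{\nu^{2}}^{(1/2)}$ to $C_{\nu}^{(1)}$) — the same duplication/halving relations used in \cite{Forrester_2009}.

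The main obstacle I anticipate is bookkeeping: matching the Kaneko--Kadell output, expressed naturally in the Jack normalization with parameter $\alpha = 2/\beta$, against the target expressions, which are written in the $\alpha=1$ (Schur) normalization with the $d'$-symbols. This requires carefully tracking how $[u]^{(\alpha)}_{\nu}$ transforms under $\nu \mapsto 2\nu$ or $\nu \mapsto \nu^{2}$ and under the change of Jack parameter, and verifying that all spurious Gamma factors cancel. A secondary point is justifying the hypothesis $N \geq 2M$: this is exactly the range in which the Jacobi exponent $a = N - 2M$ (or its analogue) is nonnegative so that the eigenvalue density of $AA^{\dagger}$ has the stated form and the Kaneko--Kadell integral converges; I would state this explicitly and note that the final formulae, being rational in $N$, extend by analytic continuation where needed. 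Once the algebra is organized — ideally by first doing the $\beta=2$ case, which is cleanest since $\mu=\nu$ with no doubling — the $\beta=1$ and $\beta=4$ cases follow by the same computation with the substitutions dictated by Lemma~\ref{theorem_Forrester_Rains}.
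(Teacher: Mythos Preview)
Your proposal is correct and follows essentially the same route as the paper: apply Lemma~\ref{theorem_Forrester_Rains} to $A$, use that the eigenvalues of $AA^{\dagger}$ follow the Jacobi $\beta$-ensemble with parameters $a=\tfrac{\beta}{2}-1$, $b=\tfrac{\beta}{2}(N-2M+1)-1$ (this is where $N\geq 2M$ enters), evaluate $\mathbb{E}[C_\nu^{(2/\beta)}(AA^{\dagger})]$ by the Kaneko--Kadell integral, and then simplify. The bookkeeping you anticipate is lighter than you suggest: the paper dispatches it in one stroke via the identity $\tfrac{[\frac{\beta}{2}M]^{(2/\beta)}_\nu}{C^{(2/\beta)}_\nu(1^M)} = \tfrac{d'_{2\nu}|_{\alpha=1}}{|\nu|!\,2^{2|\nu|}}$ (and its $\beta=2,4$ analogues) from \cite{Borodin_2003}, which absorbs both the $C_\nu^{(\alpha)}(1^M)$ factor and the Pochhammer numerator from Kaneko--Kadell directly into the target $d'$-expression, so no separate duplication relations between $C_{2\nu}^{(2)}$ and $C_\nu^{(1)}$ are needed.
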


\begin{proof} 
By invariance of the Haar measure we immediately see that Theorem \ref{theorem_Forrester_Rains} applies to the matrix $A$ defined in \eqref{uconstruct}. The required Schur polynomial average is thus reduced to averaging a Jack polynomial in the eigenvalues of $AA^{\dagger}$. The joint probability density function of the eigenvalues $y_{1},\ldots,y_{M}$ of $Y = AA^{\dagger}$ was computed in \cite{Forrester_2006} and is given by the \textit{Jacobi $\beta$-Ensemble}, 
\begin{align}
	\frac{1}{S_{M}(a,b,\beta)}\,\prod^M_{j=1}
	y_j^{a}
	(1 - y_j)^{b}
	|\Delta(\vec{y})|^{\beta}
	, \label{ForJPDF}
\end{align}
where the parameters are $a = \frac{\beta}{2}-1$ and $b = \frac{\beta}{2}(N-2M+1)-1$. The normalization constant in \eqref{ForJPDF} is given by Selberg's integral \cite{Forrester08theimportance},
\begin{align}
S_{M}(a,b,\beta) &= \int_{[0,1]^{M}}\prod_{j=1}^{M}dy_{j}\,y_{j}^{a}(1-y_{j})^{b}|\Delta(\vec{y})|^{\beta}\\
&= \prod_{j=0}^{M-1}\frac{\Gamma(a+1+j\frac{\beta}{2})\Gamma(b+1+j\frac{\beta}{2})\Gamma(1+(j+1)\frac{\beta}{2})}{\Gamma(a+b+2+(M+j-1)\frac{\beta}{2})\Gamma(1+\frac{\beta}{2})}. \label{selberg}
\end{align}
Then the sought Jack polynomial average with respect to the density \eqref{ForJPDF} is the Kaneko-Kadell integral, this was evaluated explicitly in the works \cite{Kaneko_1993, Kadell1997TheSS}. Using these results we have
\begin{align}
 	\mathbb{E}
		\left [
			C_\nu^{( 2 / \beta )}(AA^\dag)
		\right ]
&= \frac{1}{S_{M}(a,b,\beta)}\,
	\int_{[0,1]^{M}}
	\prod^M_{j=1}dy_{j}\,
	y_j^{a}
	(1 - y_j)^{b}
	C_\nu^{( 2 / \beta )}(\vec{y})
	|\Delta(\vec{y})|^{\beta}
	\label{zonal_average_over_truncated_pd_2/beta}\\
	&= \frac{\left [a+1+(M-1)\frac{\beta}{2}\right ]^{(\frac{2}{\beta})}_\nu}{\left [a+b+2+2(M-1)\frac{\beta}{2} \right ]^{(\frac{2}{\beta})}_\nu}
	C_\nu^{( 2 / \beta )}(1^M)\\
	&= \frac{\left [ \frac{\beta}{2}M \right ]^{(\frac{2}{\beta})}_\nu}{\left [ \frac{\beta}{2}N \right ]^{(\frac{2}{\beta})}_\nu}
	C_\nu^{( 2 / \beta )}(1^M). \label{kaneko-integral}
 \end{align}

To complete the proof, we make use of the following evaluation of normalised Jack polynomials \cite{Borodin_2003},
\begin{align}
	&\frac{\left [ \frac{\beta}{2}M \right ]^{(2/\beta)}_\nu}{C^{(2/\beta)}_\nu(1^M)}
	=
	\frac{d'_{2\nu} |_{\alpha=1}}{|\nu|! 2^{2|\nu|}} \label{borid}
	,\end{align}
	for $\beta=1, 2$ or $4$ respectively. Substituting \eqref{borid} into \eqref{kaneko-integral} and applying Theorem \ref{theorem_Forrester_Rains} completes the proof.
\end{proof}

Our next goal is to use Theorem \ref{theorem_schur_average_pre_sum} to evaluate the moments \eqref{R_definition} in terms of hypergeometric functions of matrix argument. These are defined as
\begin{align}
	_pF^{(\alpha)}_q(a_1,...,a_p, b_1,...,b_q;X)
	=
	\sum_\nu
	\frac{1}{|\nu|!}
	\frac{[a_1]_\nu^{(\alpha)} \cdot \cdot \cdot [a_p]_\nu^{(\alpha)}}{[b_1]_\nu^{(\alpha)} \cdot \cdot \cdot [b_q]_\nu^{(\alpha)}} C^{(\alpha)}_\nu(X),
	\label{pFq_definition}
\end{align}
where the sum extends over all partitions $\nu$. For a comprehensive reference to the properties of such functions we refer the reader to the treatise by Yan \cite{Y92}. In the particular case $p=1, \ q=0$ we have the generalised binomial series \cite[Theorem 3.1]{Y92},
\begin{align}
	_1F_0^{(\alpha)}(a; X)
	=
	\sum_\nu
	\frac{[a]_\nu^{(\alpha)}}{|\nu|!} C_\nu^{(\alpha)}(X)
	=
	\det (I - X)^{-a}.
	\label{1F0}
\end{align}
When $\alpha=1$ this gives the expansion in terms of Schur polynomials. Comparing with \eqref{njack}, we have
\begin{equation}
\det (I - X)^{-a} = \sum_{\mu}\frac{[a]_\mu^{(1)}}{d'_\mu|_{\alpha=1}} s_{\mu}(X). \label{binschur}
\end{equation}
\begin{theorem} 
\label{theorem_average_as_2F1}
	Consider the averages $\eqref{R_definition}$ with the truncation $A$ replaced by $AV$ where $V$ is an $M \times M$ deterministic matrix of complex numbers. Then for $\beta \in \{2,4\}$ we have
	\begin{align}
		R^{(\beta)}_{\gamma}(x;V)
		=
		|x|^{\gamma M}
		\
		_2F_1^{(\frac{2}{\beta})}\left(-\frac{\gamma}{2}, -\frac{\gamma}{2}+ 1 - \frac{\beta}{2}; \frac{\beta}{2} N; \frac{1}{|x|^{2}} \Sigma\right),
		\label{char_pol_pre_integral}
	\end{align}
	where $\mathrm{Re}(\gamma) > -1$ and $\Sigma = VV^{\dagger}$. If $\beta=1$ then \eqref{char_pol_pre_integral} continues to hold with the replacements $|x|^{2} \to x^{2}$ and $VV^{\dagger} \to VV^{\mathrm{T}}$, while if $\beta=4$ the matrix $V$ has real quaternion entries. 
\end{theorem}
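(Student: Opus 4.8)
The plan is to follow the route outlined in the introduction: expand the relevant power of the characteristic polynomial into a sum of Schur polynomials using the generalised binomial series \eqref{binschur}, take the expectation term by term with Theorem \ref{theorem_schur_average_pre_sum}, and recognise the resulting sum over partitions as the hypergeometric series \eqref{pFq_definition}.

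First I would write $\det(xI_M - AV) = x^M\det(I_M - x^{-1}AV)$, which produces the prefactor $|x|^{\gamma M}$ (equal to $x^{\gamma M}$ when $\beta = 1$; when $\beta = 4$ the determinant is read through the $2M \times 2M$ complex representation of the quaternion matrix, which accounts for the exponent). Restricting to $|x| > \|V\|_{\mathrm{op}}$, so that the eigenvalues of $x^{-1}AV$ lie in the open unit disc since $\|A\|_{\mathrm{op}} \le 1$ for a truncation, I would apply \eqref{binschur} with $a = -\gamma/2$ (with $a = -\gamma$ when $\beta = 1$) to $\det(I_M - x^{-1}AV)^{\gamma/2}$, and for $\beta = 2$ also to the conjugate factor $\det(I_M - \bar{x}^{-1}(AV)^\dagger)^{\gamma/2}$ arising from $|\det(\cdot)|^{\gamma}$. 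Homogeneity, $s_\mu(x^{-1}AV) = x^{-|\mu|}s_\mu(AV)$, isolates the powers of $x$, and the absolute convergence of \eqref{binschur} together with the deterministic bound $\|AV\|_{\mathrm{op}} \le \|V\|_{\mathrm{op}}$ lets one interchange sum and expectation. Substituting Theorem \ref{theorem_schur_average_pre_sum}: for $\beta = 2$ the Kronecker delta collapses the double partition sum to a single one, while for $\beta \in \{1,4\}$ only the partitions $\mu = 2\nu$ (resp.\ $\mu = \nu^2$) contribute. In every case this leaves $|x|^{\gamma M}$ times a sum over $\nu$ of the form $\bigl(|\nu|!\,[\tfrac{\beta}{2}N]^{(2/\beta)}_\nu\bigr)^{-1}$ times a product of $[{-\gamma/2}]^{(1)}$-type Pochhammer symbols times $C^{(2/\beta)}_\nu(|x|^{-2}\Sigma)$, the numerator being $([{-\gamma/2}]^{(1)}_\nu)^2$ for $\beta = 2$, $[{-\gamma}]^{(1)}_{2\nu}$ for $\beta = 1$, and $[{-\gamma/2}]^{(1)}_{\nu^2}$ for $\beta = 4$.

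The last step converts each numerator into a ratio of two generalised Pochhammer symbols of parameter $\alpha = 2/\beta$. For $\beta = 2$ there is nothing to do. For $\beta = 1$ I would use $[u]^{(1)}_{2\nu} = 2^{2|\nu|}[\tfrac{u}{2}]^{(2)}_\nu[\tfrac{u+1}{2}]^{(2)}_\nu$, whose factor $2^{2|\nu|}$ cancels the one built into $d'_{2\nu}$ in \eqref{schur_average_pre_sum_beta_1}, producing parameters $-\gamma/2$ and $(1-\gamma)/2 = -\gamma/2 + 1 - \tfrac{1}{2}$. For $\beta = 4$ the identity $[u]^{(1)}_{\nu^2} = [u]^{(1/2)}_\nu[u-1]^{(1/2)}_\nu$ gives parameters $-\gamma/2$ and $-\gamma/2 - 1 = -\gamma/2 + 1 - 2$. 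Thus in all cases the upper parameters are exactly $-\gamma/2$ and $-\gamma/2 + 1 - \tfrac{\beta}{2}$ and the lower one is $\tfrac{\beta}{2}N$; comparing with \eqref{pFq_definition} gives \eqref{char_pol_pre_integral}. These duplication identities are standard and follow directly from the product form \eqref{generalised_pochhammer}, see also \cite{Macdonald, Forrester_2010_Log}. The identity, proved for $|x| > \|V\|_{\mathrm{op}}$ and $\mathrm{Re}(\gamma) > -1$, becomes an identity of polynomials in $x^{-1}$ whenever $-\gamma/2 \in \mathbb{Z}_{\le 0}$, as both sides then terminate.

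I expect the work to be bookkeeping rather than conceptual. The delicate points are: getting the duplication formulae and the constants $d'_{2\nu}$, $d'_{\nu^2}$, $2^{2|\nu|}$, $|\nu|!$ to cancel so that precisely the parameters $-\gamma/2$ and $-\gamma/2 + 1 - \beta/2$ emerge — this is where $\beta \in \{1,4\}$ genuinely differ from the immediate $\beta = 2$ case; handling the quaternion conventions for $\beta = 4$ consistently (the determinant, the exponent $\gamma/2$, and the meaning of $s_\mu(VA)$ all via the $2M \times 2M$ complex representation, as in Theorem \ref{theorem_schur_average_pre_sum}); and the routine analytic justification of the term-by-term expectation and of the extension in $\gamma$ and $x$.
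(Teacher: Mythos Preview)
Your proposal is correct and takes essentially the same approach as the paper's proof: binomial expansion \eqref{binschur}, termwise application of Theorem \ref{theorem_schur_average_pre_sum}, and the same Pochhammer duplication identities $[-\gamma]^{(1)}_{2\nu} = 2^{2|\nu|}[-\gamma/2]^{(2)}_\nu[(1-\gamma)/2]^{(2)}_\nu$ and $[u]^{(1)}_{\nu^2} = [u]^{(1/2)}_\nu[u-1]^{(1/2)}_\nu$ to recognise the $_2F_1^{(2/\beta)}$ series. The only cosmetic differences are that the paper first establishes the identity for $\mathbb{E}[\det(I_M - VA)^{\gamma}]$ and then substitutes $V \mapsto x^{-1}V$, and that for $\beta=4$ it carries the exponent $\gamma$ through the intermediate calculation before matching to the statement; note also that the $2^{2|\nu|}$ you cancel is the explicit factor in \eqref{schur_average_pre_sum_beta_1}, not part of $d'_{2\nu}$ itself.
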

\begin{remark}
If $\gamma$ is not an integer, we also require the bound on the operator norm $|| \frac{1}{|x|^2} \Sigma || < 1$ for the $_2F_1$ series to converge absolutely. If $\gamma =2k$ with $k \in \mathbb{N}$, the case mainly considered here, the series \eqref{char_pol_pre_integral} terminates and consists of only finitely many terms. Then the condition on the operator norm is not required.
\end{remark}
\begin{proof}[Proof of Theorem \ref{theorem_average_as_2F1}]
We begin with the real case $\beta=1$. Multiplying both sides of \eqref{schur_average_pre_sum_beta_1} by $
\frac{[-\gamma]^{(1)}_\mu}{d'_\mu|_{\alpha=1}}$ yields
\begin{align}
	\mathbb{E}
		\left [
			\frac{[-\gamma]^{(1)}_\mu}{d'_{\mu}|_{\alpha=1}}
			s_\mu (VA)
		\right ]
	&=	
	\frac{1}{\left | \nu \right |!}
	\frac{[-\gamma]^{(1)}_{2\nu}}{2^{2\left | \nu \right |} \left [ \frac{1}{2} N \right ]^{(2)}_\nu}
	C_\nu^{(2)}(VV^T)
	\label{eqnorms}
	,
\end{align}
when $\mu = 2\nu$, and $0$ otherwise. Then we insert the identity $[-\gamma]^{(1)}_{2\nu} = 2^{2|\nu|} [-\gamma/2]^{(2)}_{\nu} 	[(-\gamma+1)/2]^{(2)}_{\nu}$ into the right-hand side of \eqref{eqnorms} and sum both sides over all partitions $\mu$. The sum on the left-hand side is evaluated by recalling the generalised binomial expansion \eqref{binschur}, while the sum on the right-hand side coincides with the definition of the $_2F_1$ hypergeometric function defined in \eqref{pFq_definition}. We thus obtain
\begin{align}
	\mathbb{E}
	\left [
		\det ( I_{M} - VA )^\gamma
	\right ]
	= \
	_2F_1^{(2)}(-\gamma/2, (-\gamma+1)/2; N/2; VV^T)
	.	
\end{align}
For $\beta \in \{2,4\}$, the approach is similar so we only highlight the main differences. For $\beta=2$ we multiply both sides of \eqref{schur_average_pre_sum_beta_2} by $\frac{[-\frac{\gamma}{2}]^{(1)}_{\mu}[-\frac{\gamma}{2}]^{(1)}_{\nu}}{d'_{\mu}|_{\alpha=1}d'_{\nu}|_{\alpha=1}}$. Then summing over $\mu$ and $\nu$ we similarly get 
\begin{align}
	\mathbb{E}
	\left [
		|\det ( I_{M} - VA )|^{\gamma}
	\right ]
	= \
	_2F_1^{(1)}(-\gamma/2, -\gamma/2; N; VV^\dag)
	.
\end{align}
For $\beta=4$ we multiply both sides of \eqref{schur_average_pre_sum_beta_4} by $\frac{[-\gamma]^{(1)}_\mu}{d'_\mu|_{\alpha=1}}$ and sum over $\mu$ using the identity $[-\gamma]_{\nu^2}^{(1)} = [-\gamma]_{\nu}^{(1/2)} [-\gamma-1]_{\nu}^{(1/2)}$ to obtain
\begin{align}
	\mathbb{E}
	\left [
		\det ( I_{M} - VA)^\gamma
	\right ]
	= \
	_2F_1^{(1/2)}(-\gamma, -\gamma-1; 2N; VV^\dag)
	.	
\end{align}
Substituting $V \mapsto x^{-1}V $ we obtain \eqref{char_pol_pre_integral} as a consequence of these identities.
\end{proof}

\begin{remark}
The proof of Theorem \ref{theorem_average_as_2F1} makes essential use of the identities of Theorem \ref{theorem_schur_average_pre_sum} which are only stated for $N \geq 2M$. This can be traced back to the result \eqref{ForJPDF} for the eigenvalue distribution of $AA^{\dagger}$ which is singular if $M$ is too large. Despite this, we can express the averages in \eqref{R_definition} in terms of the joint probability density function of the eigenvalues of $A$ provided in the works \cite{Zyczkowski:2000ue,Khoruzhenko_2010} for $\beta \in \{1,2\}$, for $\beta=4$ see \cite{Forrester16} for $N \geq 2M$ and the recent work \cite{KL21} for any $N>M$. For any $\beta \in \{1,2,4\}$, the latter eigenvalue distributions for $A$ are well defined for all $N>M$ with $N$ appearing as a parameter. In this way, Theorem \ref{theorem_average_as_2F1} extends to any $N>M$ by analytic continuation.
\end{remark}

\begin{corollary}
\label{bndrycor}
For any $N\geq M$, $\gamma>0$ and $\theta \in \mathbb{R}$, we have the following identities for $\beta = 1, 2$ and $4$ respectively:
\begin{align}
	&\mathbb{E}
	\left [
		\det ( I_{M} - A )^\gamma
	\right ]
	=
	\prod_{j=N-M+1}^{N}
	\frac{\Gamma(\frac{j}{2} )\Gamma(\frac{j-1}{2} +\gamma)}{\Gamma(\frac{j}{2}+\frac{\gamma}{2}) \Gamma(\frac{j-1}{2}+ \frac{\gamma}{2})}
	,
	\label{truncation_at_edge_beta_1}
	\\
	&
	\mathbb{E}
	\left [
		|\det ( e^{i \theta} I_{M} - A )|^{\gamma}
	\right ]
	=
	\prod_{j=N-M+1}^{N}
	\frac{\Gamma(j) \Gamma(j+ \gamma)}{ \left ( \Gamma(j + \frac{\gamma}{2}) \right )^2}
	,
	\label{truncation_at_edge_beta_2}	
	\\
	&
	\mathbb{E}
	\left [
		\det ( e^{i \theta} I_{M} - A )^\gamma
	\right ]
	= 
	\prod_{j=N-M+1}^{N}
	\frac{\Gamma(2j) \Gamma(2j+ 2\gamma + 1)}{\Gamma(2j+ \gamma) \Gamma(2j + \gamma + 1)}.
	\label{truncation_at_edge_beta_4}
\end{align}
\end{corollary}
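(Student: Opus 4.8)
The plan is to deduce Corollary \ref{bndrycor} from Theorem \ref{theorem_average_as_2F1} by specialising the deterministic matrix to $V=I_M$ and the point $x$ to the unit circle. With $V=I_M$ we have $\Sigma=VV^{\dagger}=I_M$ (and $VV^{\mathrm T}=I_M$ in the real case), so the matrix argument $\tfrac{1}{|x|^{2}}\Sigma$ of the hypergeometric function in \eqref{char_pol_pre_integral} is $\tfrac{1}{|x|^{2}}I_M$, which on $|x|=1$ (for $\beta=1$, on $x^{2}=1$) is just $I_M$. The identities produced in the proof of Theorem \ref{theorem_average_as_2F1} then read, on the boundary,
\begin{align*}
\mathbb{E}\!\left[\det(I_M-A)^{\gamma}\right]&={}_2F_1^{(2)}\!\left(-\tfrac{\gamma}{2},\tfrac{1-\gamma}{2};\tfrac{N}{2};I_M\right)&&\text{for }\beta=1,\\
\mathbb{E}\!\left[|\det(e^{i\theta}I_M-A)|^{\gamma}\right]&={}_2F_1^{(1)}\!\left(-\tfrac{\gamma}{2},-\tfrac{\gamma}{2};N;I_M\right)&&\text{for }\beta=2,\\
\mathbb{E}\!\left[\det(e^{i\theta}I_M-A)^{\gamma}\right]&={}_2F_1^{(1/2)}\!\left(-\gamma,-\gamma-1;2N;I_M\right)&&\text{for }\beta=4,
\end{align*}
and the whole problem is reduced to evaluating a ${}_2F_1$ of matrix argument at the identity.

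For this I would invoke the Gauss summation theorem for hypergeometric functions of matrix argument: provided $\mathrm{Re}(c-a-b)>\tfrac{M-1}{\alpha}$,
\begin{equation*}
{}_2F_1^{(\alpha)}(a,b;c;I_M)=\prod_{j=0}^{M-1}\frac{\Gamma\!\left(c-\tfrac{j}{\alpha}\right)\Gamma\!\left(c-a-b-\tfrac{j}{\alpha}\right)}{\Gamma\!\left(c-a-\tfrac{j}{\alpha}\right)\Gamma\!\left(c-b-\tfrac{j}{\alpha}\right)},
\end{equation*}
the multivariate analogue of the classical Gauss formula (\cite{Y92, Kaneko_1993}). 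Equivalently this can be seen from the integral side: at $|x|=1$ the factor $(1+(|x|^{2}-1)t_{i})^{M}$ in the representation of Theorem \ref{theorem_truncated_duality_integrals_sigma} becomes $1$, so the $k$-fold integral collapses to a Selberg integral, explicitly evaluated by \eqref{selberg}; this already proves the Corollary for integer $\gamma=2k$ and exhibits the general-$\gamma$ formula as the natural analytic continuation. Inserting the three parameter triples $(\alpha;a,b,c)$ from the displays above into the Gauss formula and reindexing the product by $\ell=N-j$, so that $\ell$ runs over $N-M+1,\dots,N$, each right-hand side turns into exactly \eqref{truncation_at_edge_beta_1}, \eqref{truncation_at_edge_beta_2} and \eqref{truncation_at_edge_beta_4}; this last step is a routine manipulation of Gamma quotients.

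The point that genuinely needs checking is the convergence hypothesis of the Gauss summation at the boundary argument $I_M$, namely $\mathrm{Re}(c-a-b)>\tfrac{M-1}{\alpha}$: for $\beta=2$ this is $N+\gamma>M-1$, for $\beta=4$ it is $2N+2\gamma+1>2M-2$, and for $\beta=1$ it is $N+2\gamma>M$. Since $N\ge M$, the first and third hold for all $\gamma>0$ (in fact for $\gamma>-1$), while the middle one is precisely what forces the restriction $\gamma>0$ stated in the Corollary; this is also why Theorem \ref{prop:introbndry}, being the unitary case, can afford the weaker hypothesis $\mathrm{Re}(\gamma)>-1$ with the same argument. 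Two secondary matters to record: Theorem \ref{theorem_average_as_2F1} was derived via Theorem \ref{theorem_schur_average_pre_sum} under $N\ge 2M$, so one first extends it to all $N\ge M$ through the analytic-continuation remark following it; and in the symplectic case one retains the convention from the introduction for reading $\det(e^{i\theta}I_M-A)$ when $A$ has real quaternion entries. Beyond these points the argument is bookkeeping.
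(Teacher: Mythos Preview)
Your approach is correct and coincides with the paper's own proof: specialise Theorem~\ref{theorem_average_as_2F1} to $V=I_M$ and $|x|=1$, then apply the generalised Gauss summation for ${}_2F_1^{(\alpha)}$ at $I_M$. One small slip in the commentary: it is the $\beta=1$ condition $N+2\gamma>M$ (not the $\beta=4$ one, which at $N=M$ only gives $\gamma>-\tfrac{3}{2}$) that reduces to $\gamma>0$ when $N=M$ and thus forces the restriction stated in the Corollary.
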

\begin{proof}
In the boundary case $V = \Sigma = I_{M}$ with $x=\pm 1$ for $\beta = 1$ and $x = e^{i \theta}$ for $\beta \in \{2, 4\}$, the three evaluations \eqref{truncation_at_edge_beta_1}, \eqref{truncation_at_edge_beta_2} and \eqref{truncation_at_edge_beta_4} follow from \eqref{char_pol_pre_integral} and the generalised Gauss summation \cite[Corollary 3.5]{Y92}
\begin{align}
	_2F_1^{(1/\alpha)}(a,b;c;I_{M})
	=
	\prod_{j=0}^{M-1}
	\frac{\Gamma(c- \alpha j) \Gamma(c-a-b- \alpha j)}{\Gamma(c-a- \alpha j) \Gamma(c-b - \alpha j)}
	.
\end{align}
\end{proof}
\begin{remark}
\label{rem:sorem}
The three identities \eqref{truncation_at_edge_beta_1}, \eqref{truncation_at_edge_beta_2} and \eqref{truncation_at_edge_beta_4} are natural generalisations of moment formulae obtained by Keating and Snaith in the context of the classical compact groups. In our setting this correponds to the specialisation $M=N$. As pointed out in the introduction, when $M=N$ and $\beta=2$, the identity \eqref{truncation_at_edge_beta_2} specialises to the moments of characteristic polynomials in $\mathrm{U(N)}$ obtained in \cite{Keating_Snaith_2000}. Furthermore, when $M=N$ and $\beta=4$, the identity \eqref{truncation_at_edge_beta_4} specialises to the moments of characteristic polynomials in $\mathrm{Sp(2N)}$ obtained in \cite{Keating:2000ub},
\begin{equation}
\mathbb{E}_{\mathrm{Sp(2N)}}\left[\det(e^{i\theta}I_{N}-A)^{\gamma}\right] = 2^{2N\gamma}\prod_{j=1}^{N}\frac{\Gamma(1+N+j)\Gamma(1/2+\gamma+j)}{\Gamma(1/2+j)\Gamma(1+\gamma+N+j)}. \label{sp2nidentity}
\end{equation}
The agreement of \eqref{sp2nidentity} and \eqref{truncation_at_edge_beta_4} with $M=N$ can be obtained with some simple manipulations of the product and the duplication formula for the Gamma function. We stress that in formulae \eqref{truncation_at_edge_beta_4} and \eqref{sp2nidentity} we view $e^{i\theta}$ as a real quaternion, hence when viewed as $2N \times 2N$ complex matrices the argument $e^{i\theta}$ must be real, as in the work \cite{Keating:2000ub}.

When $\beta=1$ the corresponding result in \cite{Keating:2000ub} is for $\mathrm{SO(2N)}$ which is the subgroup of $\mathrm{O(2N)}$ consisting of orthogonal matrices with determinant $1$,
\begin{equation}
\mathbb{E}_{\mathrm{SO(2N)}}\left[ \det ( I_{2N} - A )^\gamma\right] = 2^{2N\gamma}
	\prod_{j=1}^{N}
	\frac{\Gamma(N - 1 + j )\Gamma(\gamma - \frac{1}{2} + j)}{\Gamma(-\frac{1}{2}+j) \Gamma(N - 1 + \gamma + j)} \label{KS}.
	\end{equation}
To see the connection with our result, we note the disjoint union $\mathrm{O(2N)} = \mathrm{SO(2N)}\cup \mathrm{O^{-}(2N)}$, where $\mathrm{O^{-}(2N)}$ consists of the elements of $\mathrm{O(2N)}$ that have determinant $-1$. Matrices in $\mathrm{O^{-}(2N)}$ have trivial eigenvalues $1$ and $-1$, so they do not contribute to the moments of $\det(I_{M}-A)$. By relating the Haar measure on $\mathrm{O(2N)}$ and $\mathrm{SO(2N)}$ we find for $\gamma>0$,
\begin{equation}
\mathbb{E}_{\mathrm{O(2N)}}\left[ \det ( I_{2N} - A )^\gamma\right] = \frac{1}{2}\,\mathbb{E}_{\mathrm{SO(2N)}}\left[ \det ( I_{2N} - A )^\gamma\right]. \label{OSO}
\end{equation}
It is worth cautioning that \eqref{OSO} is not valid when $\gamma=0$ because of an additional $\frac{1}{2}$ contributing from $\mathrm{O^{-}(2N)}$. After inserting the $M=N$ specialisation of \eqref{truncation_at_edge_beta_1} into \eqref{OSO} we reclaim the Keating and Snaith result \eqref{KS} by similar manipulations and duplication identities used in the symplectic case. 
\end{remark}
\begin{theorem}
\label{propv}
For $\beta \in \{2,4\}$ we have
	\begin{equation}
R^{(\beta)}_{2k}(x;V) = \frac{\det(|x|^{2}-\Sigma)^{k}}{S_{k,N}(\beta)}
	\int_{[0,1]^k} \prod_{i=1}^{k}dt_{i}\, t_i^{N-M}
	\prod_{j=1}^{M} (t_i - \gamma_j)
	 |\Delta(\vec{t})|^{\beta'}
		\label{char_pol_integral}
		,
	\end{equation}
	where $\Sigma = VV^{\dagger}$, $\{\gamma_{j}\}_{j=1}^{M}$ are the eigenvalues of $-\Sigma(|x|^{2}-\Sigma)^{-1}$,  and $S_{k,N}(\beta)$ is given by \eqref{skn}. If $\beta=1$, \eqref{char_pol_integral} continues to hold with the replacements $|x|^{2} \to x^{2}$ and $VV^{\dagger} \to VV^{\mathrm{T}}$.
\end{theorem}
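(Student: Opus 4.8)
The plan is to combine the hypergeometric identity of Theorem~\ref{theorem_average_as_2F1} with a Kaneko-type integral representation of the ${}_2F_1$ of matrix argument. Setting $\gamma=2k$ in \eqref{char_pol_pre_integral} gives, for $\beta\in\{2,4\}$,
\begin{equation}
R^{(\beta)}_{2k}(x;V)=|x|^{2kM}\,{}_2F_1^{(2/\beta)}\!\left(-k,\,-k+1-\tfrac{\beta}{2};\,\tfrac{\beta}{2}N;\,\tfrac{1}{|x|^{2}}\Sigma\right),\qquad \Sigma=VV^{\dagger},
\end{equation}
with the usual replacements $|x|^{2}\mapsto x^{2}$, $VV^{\dagger}\mapsto VV^{\mathrm T}$ when $\beta=1$. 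Because $-k$ is a negative integer the series terminates. First I would apply the Pfaff--Euler transformation for the matrix ${}_2F_1$ (see \cite{Y92}), namely ${}_2F_1^{(\alpha)}(a,b;c;X)=\det(I-X)^{-a}\,{}_2F_1^{(\alpha)}(a,c-b;c;-X(I-X)^{-1})$, with $a=-k$ and $X=\tfrac{1}{|x|^{2}}\Sigma$. Since $-X(I-X)^{-1}$ equals $\Gamma:=-\Sigma(|x|^{2}I_{M}-\Sigma)^{-1}$, whose eigenvalues are exactly the $\gamma_j$, and $\det(I-X)^{k}=|x|^{-2Mk}\det(|x|^{2}I_{M}-\Sigma)^{k}$, this yields
\begin{equation}
R^{(\beta)}_{2k}(x;V)=\det(|x|^{2}I_{M}-\Sigma)^{k}\;{}_2F_1^{(2/\beta)}\!\left(-k,\,\tfrac{\beta(N+1)}{2}+k-1;\,\tfrac{\beta}{2}N;\,\Gamma\right)
\end{equation}
(the degenerate cases, where $\Sigma$ is singular or $|x|^{2}$ is an eigenvalue of $\Sigma$, following by continuity). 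It therefore suffices to establish the purely symmetric-function identity, for an arbitrary $M\times M$ matrix $\Gamma$,
\begin{equation}
\frac{1}{S_{k,N}(\beta)}\int_{[0,1]^k}\prod_{i=1}^{k}dt_i\;t_i^{N-M}\det(t_iI_{M}-\Gamma)\,|\Delta(\vec t)|^{\beta'}={}_2F_1^{(2/\beta)}\!\left(-k,\,\tfrac{\beta(N+1)}{2}+k-1;\,\tfrac{\beta}{2}N;\,\Gamma\right), \label{kanekotarget}
\end{equation}
since substituting \eqref{kanekotarget} into the previous display and using $\det(t_iI_{M}-\Gamma)=\prod_{j}(t_i-\gamma_j)$ reproduces \eqref{char_pol_integral} verbatim.

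To prove \eqref{kanekotarget} I would expand the integrand via the dual Cauchy identity for Jack polynomials. Writing $\prod_{i=1}^{k}\det(t_iI_{M}-\Gamma)=\prod_{i}\prod_{j}(t_i-\gamma_j)$ and using the rectangular complementation rule $P_\nu^{(\alpha)}(t_1^{-1},\ldots,t_k^{-1})=(t_1\cdots t_k)^{-M}P_{\hat\nu}^{(\alpha)}(\vec t)$ for $\nu$ inside the $k\times M$ box (with $\hat\nu$ the complementary partition), one rewrites this product as a finite sum $\sum_{\mu}c_\mu\,P_\mu^{(\beta/2)}(\vec t)\,P_{\mu^{\sharp}}^{(2/\beta)}(\Gamma)$ over partitions $\mu$ in the $k\times M$ box, where $\mu^{\sharp}$ denotes the transpose of the complement of $\mu$. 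The crucial point is that the dual Cauchy identity interchanges the Jack parameter $\alpha\leftrightarrow 1/\alpha$: the $\vec t$ variables carry parameter $\beta/2$ while $\Gamma$ carries $2/\beta$, and $2/(\beta/2)=\beta'$ is exactly the Vandermonde exponent in \eqref{kanekotarget}. Integrating term by term against $\prod_i t_i^{N-M}|\Delta(\vec t)|^{\beta'}$ using the Kaneko--Kadell integral \cite{Kaneko_1993,Kadell1997TheSS}, in its form with trivial $(1-t_i)^{0}$ factor as in the proof of Theorem~\ref{theorem_schur_average_pre_sum}, produces a Selberg-integral prefactor $S_k(N-M,0,\beta')$, a ratio of generalised Pochhammer symbols $[\,\cdot\,]^{(\beta/2)}_\mu$, and the principal specialisation $C_\mu^{(\beta/2)}(1^k)$. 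Re-summing and applying the standard $\alpha\leftrightarrow 1/\alpha$, $\nu\leftrightarrow\nu'$ transpose-duality identities --- for generalised Pochhammer symbols, for $C_\nu^{(\alpha)}(1^n)$, and for the constants $d'_\nu$ and $\bar{f}^{1/\alpha}(\nu)$ --- then collapses the sum into the ${}_2F_1^{(2/\beta)}$ on the right of \eqref{kanekotarget}. That the overall constant comes out as $S_{k,N}(\beta)$ is confirmed by the special value $\Gamma=0$: there the left side of \eqref{kanekotarget} reduces to $\tfrac{1}{S_{k,N}(\beta)}\int_{[0,1]^k}\prod_i t_i^{N}|\Delta(\vec t)|^{\beta'}=1$, using that $\int_{[0,1]^k}\prod_i t_i^{N}|\Delta(\vec t)|^{\beta'}$ equals the product \eqref{skn} by \eqref{selberg}, and this matches ${}_2F_1(\,\cdot\,;0)=1$.

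The case $\beta=1$ is handled identically, starting from the $\beta=1$ form of \eqref{char_pol_pre_integral} with $x^{2}$ and $VV^{\mathrm T}$ in place of $|x|^{2}$ and $VV^{\dagger}$ and carrying the real structure through; reality of $x$ intervenes only where Theorem~\ref{theorem_average_as_2F1} already requires it. I expect the main obstacle to be precisely the parameter bookkeeping behind \eqref{kanekotarget}: verifying that after the dual Cauchy expansion and term-by-term integration the surviving partition sum is \emph{exactly} the ${}_2F_1^{(2/\beta)}$ with parameters $\bigl(-k,\,\tfrac{\beta(N+1)}{2}+k-1;\,\tfrac{\beta}{2}N\bigr)$. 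This requires chaining together several transpose-duality relations, each routine in isolation but delicate in combination, and for $\beta\in\{1,4\}$ no Schur-function shortcut is available. A secondary point is that Theorem~\ref{theorem_average_as_2F1}, and hence the starting identity, is established only for $N\geq 2M$; one then extends \eqref{char_pol_integral} to all $N>M$ by analytic continuation, using that for fixed $M$ and $k$ both sides are rational functions of $N$.
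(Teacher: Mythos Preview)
Your proposal is correct and follows essentially the same route as the paper: start from the ${}_2F_1^{(2/\beta)}$ identity of Theorem~\ref{theorem_average_as_2F1}, apply the Pfaff--Kummer transformation to move the argument to $-\Sigma(|x|^{2}I_M-\Sigma)^{-1}$, and then recognise the resulting ${}_2F_1^{(2/\beta)}$ as a Selberg-type integral of the claimed form. The only difference is that you propose to \emph{derive} the integral representation \eqref{kanekotarget} from scratch via the dual Cauchy identity and the Kaneko--Kadell integral, whereas the paper simply cites this as Kaneko's Theorem~5 \cite{Kaneko_1993} (displayed here as \eqref{Kaneko_2F1_integral_representation}) and reads off the parameters $\alpha=2/\beta$, $\lambda_1=N-M$, $\lambda_2=0$; your re-derivation is precisely how Kaneko proves that theorem, so this is extra work rather than a genuinely different argument.
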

\begin{proof}[Proof of Theorems \ref{propv} and \ref{theorem_truncated_duality_integrals_sigma}]
Our starting point is the hypergeometric evaluation \eqref{char_pol_pre_integral} with $\gamma=2k$ and $k \in \mathbb{N}$. The following identity is a consequence of the so-called \textit{Kummer relations} for hypergeometric functions of matrix argument, see \textit{e.g.} \cite[Theorem 7.4.3]{Muirhead_Aspects} for the real case or \cite[Theorem 3.2]{Y92} for arbitrary $\beta>0$,
\begin{align}
&|x|^{2kM}\,_2F_1^{(\frac{2}{\beta})}\left(-k, -k+ 1 - \frac{\beta}{2}; \frac{\beta}{2} N; \frac{1}{|x|^{2}} \Sigma\right) \\
&= \det(|x|^{2}I_{M}-\Sigma)^{k}\,_2F_1^{(\frac{2}{\beta})}\left(-k, \frac{\beta}{2}N+k- 1 + \frac{\beta}{2}; \frac{\beta}{2} N; X\right), \label{pfaffid}
\end{align}
where $X = -\Sigma(|x|^{2}I_{M}-\Sigma)^{-1}$. To complete the proof we make use of the following integral  representation of $_2F_1^{(\alpha)}$ due to Kaneko \cite[Theorem 5]{Kaneko_1993},
\begin{align}
	&_2F_1^{(\alpha)}\left(-k,\frac{1}{\alpha}(\lambda_1 + \lambda_2 + M + 1) + k - 1;\frac{1}{\alpha}(\lambda_1 + M );X\right)
	\nonumber
	\\
	&=
	\frac{1}{S_{k}(\lambda_1+M,\lambda_2,2\alpha)}
	\int_{[0,1]^{k}} \prod_{i=1}^{k}dt_{i}\,t_i^{\lambda_1}
	(1 - t_i)^{\lambda_2}\prod_{j=1}^{M}(t_i -\gamma_j) |\Delta(\vec{t})|^{2\alpha}
	,
	\label{Kaneko_2F1_integral_representation}
\end{align}
where $S_{k}(\lambda_1+M,\lambda_2,2\alpha)$ is given by \eqref{selberg} and $\{\gamma_{j}\}_{j=1}^{M}$ are the eigenvalues of $X$. Applying \eqref{Kaneko_2F1_integral_representation} to \eqref{pfaffid} with parameter choices $\alpha = \frac{2}{\beta}$, $\lambda_{1} = N-M$ and $\lambda_{2}=0$ results in the claimed identity \eqref{char_pol_integral}. Setting $V = \Sigma = I_{M}$ in \eqref{char_pol_integral} results in \eqref{truncation_characteristic_poly}.
\end{proof}
\begin{remark}
In the case $\beta=1$, a duality also occurs for the odd moments of the characteristic polynomial, see \cite{Forrester_2009} for a version of this in the real Ginibre ensemble. Setting $\gamma = 2k+1$ with $k \in \mathbb{N}$ in \eqref{char_pol_pre_integral}, we see that the second parameter of the hypergeometric function is a negative integer. Applying a similar procedure using \eqref{pfaffid} and \eqref{Kaneko_2F1_integral_representation} we obtain
	\begin{align}
\mathbb{E}\left [\det ( xI - AV )^{2k + 1}
		\right ] = \frac{1}{S_{k}(N,2,4)}\int_{[0,1]^{k}}\prod_{i=1}^{k}dt_{i}\,t_i^{N-M}(1 - t_i)^{2}\prod_{j=1}^{M}(t_{i}-\gamma_{j})|\Delta(\vec{t})|^{4}. \label{char_pol_real_integral_odd}
	\end{align}

\end{remark}

\section{Asymptotics: Proofs of Theorems \ref{theorem_strong_non-unitarity_uniform_asymptotics} and \ref{theorem_dulaity_integrals_asymptotics}}
\label{sec:asympt}
In this section, based on the duality formulae in Theorem \ref{theorem_truncated_duality_integrals_sigma} we shall obtain the asymptotic behaviour of the averages $R^{(\beta)}_{2k}(x)$. We begin with the strong non-unitarity regime.
\begin{proof}[Proof of Theorem \ref{theorem_strong_non-unitarity_uniform_asymptotics}]
We shall apply the Laplace method of asymptotics to the integral representation \eqref{truncation_characteristic_poly}. Throughout the proof we denote $\mu = \frac{M}{N} \equiv \frac{M}{N_{M}}$ so that $\mu \to \tilde{\mu} \in (0,1)$ as $M \to \infty$. We start by writing \eqref{truncation_characteristic_poly} in the form
\begin{align} 
R^{(\beta)}_{2k}(x) = \frac{1}{S_{k}(M/\mu, 0, \beta')}
	\int_{[0,1]^k}
	\prod^{k}_{j = 1}dt_{j}\,
	e^{M\varphi(t_{j})}|\Delta(\vec{t})|^{\beta'} \label{im},
\end{align}
where the action is given by
\begin{align}
	\varphi(t) = \log (1 + (|x|^2 - 1)t) + (\mu^{-1} - 1) \log t.
\end{align}
The function $\varphi(t)$ attains a unique maximum at the point $\displaystyle{t^*= \frac{1-\mu}{1-|x|^2}}$. Expanding $\varphi$ near $t=t^{*}$ yields
\begin{align}
	\varphi(t)
	=
	\log \mu + (\mu^{-1}-1)\log(t^{*}) 
	-
	\frac{(1-|x|^{2})^2}{2(1 - \mu)\mu^2}
	( t - t^{*})^2
	+
	O((t-t^*)^3)
	.
	\label{strong_non-unitarity_exponent_expansion}
\end{align}
Note that it is possible for the critical point $t^{*}$ to approach the boundary of the integration domain at $t_{j}=1$, this may happen if $x$ comes close to the circle of radius $\sqrt{\mu}$. In order to obtain a uniform expansion in $x$, we will begin by assuming that $|x|$ is sufficiently close to $\sqrt{\mu}$, namely that $t^{*} > 1-\epsilon$ for some small $\epsilon>0$. For such $t_{*}$ (or such $x$), the main contribution to the integral \eqref{im} comes from the set $[t^{*}-\epsilon,1]^{k}$, with the complementary region $[0,1]^{k} \setminus [t^{*}-\epsilon,1]^{k}$ giving an exponentially suppressed contribution. On the set $[t^{*}-\epsilon,1]^{k}$ we apply \eqref{strong_non-unitarity_exponent_expansion} resulting in the uniform approximation
\begin{align} 
&R^{(\beta)}_{2k}(x)
	\sim
	\frac{\mu^{Mk}(t^{*})^{M k(\mu^{-1} - 1)}}{S_{k}(M/\mu, 0, \beta')}
	\int_{[t^{*}-\epsilon,1]^{k}}
	\prod^{k}_{j = 1}dt_{j}\,e^{-M\frac{(1-|x|^{2})^2}{2(1 - \mu)\mu^2}(t_{j}-t^{*})^2}|\Delta(\vec{t})|^{\beta'}\\
	&\sim
	C_{k,\beta}(\mu,x)M^{\frac{k^{2}}{\beta}+\frac{k}{2}(1-\frac{2}{\beta})}\,\mu^{Mk}(t^{*})^{M k(\mu^{-1} - 1)}
	\int_{(-\infty, \sqrt{M} g(\mu,x) )^k}
	\prod_{j=1}^{k}dt_{j}\,
	e^{
	-\frac{1}{2}t_j^2
	}
	|\Delta(\vec{t})|^{\beta'},
	\label{strong_non-unitarity_asymptotics_uniform_integral}
\end{align}
where $g(\mu,x) = \frac{\mu - |x|^2}{\mu \sqrt{1 - \mu}}$ and
\begin{equation}
C_{k,\beta}(\mu,x) = \left ( \frac{\sqrt{1 - \mu}}{1-|x|^2} \right )^{k + \frac{2}{\beta}k(k-1)}\,\prod_{j=0}^{k-1}\frac{\Gamma(1+\frac{2}{\beta})}{\Gamma(1+j\frac{2}{\beta})\Gamma(1+(1+j)\frac{2}{\beta})}. \label{ckmu}
\end{equation}
To obtain the final estimate in \eqref{strong_non-unitarity_asymptotics_uniform_integral} we changed variables and inserted the asymptotics of $S_{k}(M/\mu, 0, \beta')$ using \eqref{selberg} and Stirling's formula. Then we recognise the final integral in \eqref{strong_non-unitarity_asymptotics_uniform_integral} as the distribution function of the largest eigenvalue of a $k \times k$ random matrix from the Gaussian $\beta'$-Ensemble, up to a normalization constant $G_{\beta,k}$, 
\begin{align}
	\int_{(- \infty, \sqrt{M} g(\mu,x))^{k}}
	\prod^{k}_{j = 1}dt_{j}\,
	e^{
	-\frac{1}{2}t_j^2
	}
	|\Delta(\vec{t})|^{\beta'} =
	G_{\beta,k}\,
	\mathbb{P} (\lambda_{\mathrm{max}}^{(\mathrm{G\beta'E})} < \sqrt{M} g(\mu,x)) \label{gbetaemax}
	,
\end{align}
where
\begin{align}
	G_{\beta,k}	
	=
	(2\pi)^{\frac{k}{2}}\prod_{j=0}^{k-1}
	\frac{\Gamma (1 + (1+j)\frac{2}{\beta})}{\Gamma (1 + \frac{2}{\beta})}
	.
\end{align}
Inserting \eqref{gbetaemax}, \eqref{ckmu} and the explicit formulae for $t^{*}$ into \eqref{strong_non-unitarity_asymptotics_uniform_integral} completes the proof provided $t^{*}>1-\epsilon$. If $t^{*} < 1-\epsilon$ we have that $x$ stays a fixed distance away from the boundary $|x| = \sqrt{\mu}$, in which case the $\mathrm{G\beta' E}$ probability can be replaced with $1$. Then Theorem \ref{theorem_strong_non-unitarity_uniform_asymptotics} follows by repeating the same steps, now applying the standard (two-sided) Laplace method with a critical point strictly within the interior of the integration domain.
\end{proof}

Next we consider the \textit{weak non-unitarity} regime, where we keep $\kappa = N - M$ fixed and let $M \to \infty$.
\begin{proof}[Proof of Theorem \ref{theorem_dulaity_integrals_asymptotics}]
We set $|x|^{2}=1-\frac{2u}{M}$ in the expression \eqref{truncation_characteristic_poly}. Making use of the limit $\left(1-\frac{2ut_{i}}{M}\right)^{M} \to e^{-2ut_{i}}$ as $M \to \infty$ and changing variables $t_{i} \to \frac{t_{i}}{2u}$ for each $i=1,\ldots,k$ we get
\begin{align}
R^{(\beta)}_{2k}(x) \sim
	\frac{1}{S_{k}(M+\kappa,0,\beta')}
	(2u)^{-\frac{2k^{2}}{\beta} - k(\kappa + 1 - \frac{2}{\beta} )}
	\int_{[0,2u]^{k}}\prod_{i=1}^{k}dt_{i}\,
	e^{- t_i }
	t_i^{\kappa}|\Delta(\vec{t})|^{\beta'}
	, \label{lbetaeint}
\end{align}
as $M \to \infty$. The asymptotics of $S_{k}(M+\kappa,0,\beta')$ follows from \eqref{selberg} and Stirling's formula,
\begin{equation}
S_{k}(M+\kappa,0,\beta') \sim M^{-\frac{2k^{2}}{\beta}-\left(1-\frac{2}{\beta}\right)k}\prod_{j=0}^{k-1}\frac{\Gamma(1+j\frac{2}{\beta})\Gamma(1+(j+1)\frac{2}{\beta})}{\Gamma(1+\frac{2}{\beta})}. \label{sasy}
\end{equation}
Then we recognise the integral in \eqref{lbetaeint} as the largest eigenvalue distribution of the Laguerre $\beta'$-Ensemble,
\begin{equation}
\int_{[0,2u]^{k}}\prod_{i=1}^{k}dt_{i}\,
	e^{- t_i }
	t_i^{\kappa}|\Delta(\vec{t})|^{\beta'} = W_{k,\kappa,\beta}\,\mathbb{P}(\lambda_{\mathrm{max}}^{(\mathrm{L\beta'E})} < 2u) \label{lmax}
\end{equation}
where the normalization constant is given by 
\begin{equation}
W_{k,\kappa,\beta} = \prod_{j=0}^{k-1}\frac{\Gamma(1+(j+1)\frac{2}{\beta})\Gamma(\kappa+1+j\frac{2}{\beta})}{\Gamma(1+\frac{2}{\beta})},
\end{equation}
see e.g. \cite[Theorem 4.7.3]{Forrester_2010_Log}. Inserting \eqref{lmax} and \eqref{sasy} into \eqref{lbetaeint} completes the proof of the Theorem.
\end{proof}
\section{The distribution of the log-modulus of the characteristic polynomial} 
\label{section_limit_theorems}
In this section we consider the limiting fluctuations of the random variables $\log|\det(A)|$ and also for the log-modulus of the characteristic polynomial evaluated on the unit circle.
\begin{lemma}
\label{lem:momgenfn}
The moment generating function of $\log|\det A|$ is
\begin{align}
	\mathbb{E}
	\left [
	| \det A |^\gamma
	\right ]
	&=
	\prod^{M-1}_{j=0}
	\frac{
	\Gamma( \frac{\gamma}{2} + \frac{\beta}{2} + \frac{\beta}{2} j)
	\Gamma( \frac{\beta}{2} N - \frac{\beta}{2} M + \frac{\beta}{2} + \frac{\beta}{2} j )
	}{
	\Gamma( \frac{\beta}{2} + \frac{\beta}{2} j) 
	\Gamma( \frac{\beta}{2} N - \frac{\beta}{2} M + \frac{\beta}{2} + \frac{\gamma}{2} + \frac{\beta}{2} j)
	}
	\label{moments_det_beta_truncation_finite_size_N}.
\end{align}
\end{lemma}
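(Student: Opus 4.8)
The plan is to reduce \eqref{moments_det_beta_truncation_finite_size_N} to a ratio of Selberg integrals. Since $|\det A|^{2} = \det(AA^{\dagger}) = \prod_{j=1}^{M} y_{j}$ where $y_{1},\ldots,y_{M}$ are the eigenvalues of $Y = AA^{\dagger}$, we have $|\det A|^{\gamma} = \prod_{j=1}^{M} y_{j}^{\gamma/2}$. First I would assume $N \geq 2M$ and invoke the result of \cite{Forrester_2006} recalled in \eqref{ForJPDF}, namely that the $y_{j}$ form a Jacobi $\beta$-Ensemble with $a = \frac{\beta}{2}-1$ and $b = \frac{\beta}{2}(N-2M+1)-1$. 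Integrating $\prod_{j} y_{j}^{\gamma/2}$ against this density merely shifts $a \mapsto a + \frac{\gamma}{2}$, so that
\begin{equation*}
\mathbb{E}\left[\,|\det A|^{\gamma}\,\right] = \frac{S_{M}(a + \tfrac{\gamma}{2},\, b,\, \beta)}{S_{M}(a,\,b,\,\beta)},
\end{equation*}
an identity valid at least for $\mathrm{Re}(\gamma) > -\beta$ so that the numerator integral still converges near the origin, with the general case following by meromorphy in $\gamma$.

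The second step is a direct computation of this ratio from the closed form \eqref{selberg}. In the quotient, the factors $\Gamma(b+1+j\frac{\beta}{2})$, $\Gamma(1+(j+1)\frac{\beta}{2})$ and $\Gamma(1+\frac{\beta}{2})$ appearing in \eqref{selberg} do not involve $a$, hence cancel identically, leaving
\begin{equation*}
\mathbb{E}\left[\,|\det A|^{\gamma}\,\right] = \prod_{j=0}^{M-1} \frac{\Gamma(a+\tfrac{\gamma}{2}+1+j\tfrac{\beta}{2})}{\Gamma(a+1+j\tfrac{\beta}{2})}\cdot\frac{\Gamma(a+b+2+(M+j-1)\tfrac{\beta}{2})}{\Gamma(a+\tfrac{\gamma}{2}+b+2+(M+j-1)\tfrac{\beta}{2})}.
\end{equation*}
Now I would substitute $a+1 = \frac{\beta}{2}$ and, using $a+b+2 = \frac{\beta}{2}(N-2M+2)$, the relation $a+b+2+(M+j-1)\frac{\beta}{2} = \frac{\beta}{2}(N-M+1+j) = \frac{\beta}{2}N-\frac{\beta}{2}M+\frac{\beta}{2}+\frac{\beta}{2}j$; after these substitutions the four Gamma factors are exactly those in \eqref{moments_det_beta_truncation_finite_size_N}.

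It remains to remove the temporary hypothesis $N \geq 2M$, which is essential since the weak non-unitarity limit theorems use $N = M+\kappa$ with $M \to \infty$. Here I would argue by analytic continuation in $N$: the joint eigenvalue densities of $A$ itself are available for every $N > M$ — see \cite{Zyczkowski:2000ue,Khoruzhenko_2010} for $\beta \in \{1,2\}$ and \cite{Forrester16,KL21} for $\beta=4$ — with $N$ entering only as a parameter, so integrating $|\det A|^{\gamma} = \prod_{j}|z_{j}|^{\gamma}$ against them exhibits the left-hand side of \eqref{moments_det_beta_truncation_finite_size_N} as an analytic function of $N$ on a right half-plane, while the right-hand side is patently meromorphic in $N$. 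Since the two sides agree for all integers $N \geq 2M$ by the first two steps, they agree identically. For $\beta = 2$ one can even bypass the continuation: the eigenvalue density of $A$ is rotationally invariant, so the relevant orthogonal polynomials are monomials and $\mathbb{E}\prod_{j}|z_{j}|^{\gamma}$ telescopes into a product of Beta integrals reproducing \eqref{moments_det_beta_truncation_finite_size_N} for all $N > M$ at once. The only genuine obstacle in the argument is this last step — confirming that the $A$-eigenvalue integral defines an honest analytic function of $N$ on the relevant domain — which follows from dominated convergence together with the explicit form of those densities and the convergence bound on $\gamma$.
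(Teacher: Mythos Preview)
Your proof is correct and follows essentially the same route as the paper: both reduce $|\det A|^{\gamma}$ to a ratio of Selberg integrals via the Jacobi $\beta$-ensemble density \eqref{ForJPDF} for the eigenvalues of $AA^{\dagger}$, then insert the explicit evaluation \eqref{selberg} and simplify. The only difference is that you explicitly address the removal of the hypothesis $N \geq 2M$ by analytic continuation in $N$, a point the paper's proof of this lemma does not mention (though the same continuation argument appears in the Remark following Theorem~\ref{theorem_average_as_2F1}).
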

\begin{proof}
We use that $| \det A |^\gamma = ( \det AA^\dag )^{\frac{\gamma}{2}}$ and apply the explicit joint distribution of eigenvalues of $AA^{\dagger}$ given in \eqref{ForJPDF}. This yields
\begin{align}
	\mathbb{E}
	\left [
	| \det A |^\gamma
	\right ]
	&=	
	\frac{1}{S_{M}\left(a,b,\lambda\right)}
	\int_{[0,1]^M}
	\prod^M_{j=1}dy_{j}\,
	y_j^{\frac{\gamma}{2} + a - 1 }
	(1 - y_j)^{b-1}|\Delta(\vec{y})|^{\beta}\\
	&= \frac{S_{M}\left(a+\frac{\gamma}{2},b,\lambda\right)}{S_{M}\left(a,b,\lambda\right)}, \label{selbergratio}
\end{align}
where the parameters are $a = \frac{\beta}{2}$, $b = \frac{\beta}{2}(N-2M+1)$ and $\lambda = \frac{\beta}{2}$. Inserting the explicit evaluation \eqref{selberg} into \eqref{selbergratio} and simplifying completes the proof of the Lemma.
\end{proof}
Next recall that a \textit{Beta random variable with parameters $a$ and $b$}, denoted $\mathcal{B}_{a,b}$, is the random variable taking values on $(0,1)$ with probability density function
\begin{equation}
p_{a,b}(x) = \frac{1}{B(a,b)}x^{a-1}(1-x)^{b-1}, \qquad  B(a,b) = \frac{\Gamma(a)\Gamma(b)}{\Gamma(a+b)}. \label{betapdf}
\end{equation}
Note that general non-integer moments of $\mathcal{B}_{a,b}$ are explicitly available,
\begin{equation}
\mathbb{E}(\mathcal{B}_{a,b}^{\frac{\gamma}{2}}) = \frac{\Gamma(a+\frac{\gamma}{2})\Gamma(a+b)}{\Gamma(a+b+\frac{\gamma}{2})\Gamma(a)}. \label{betamgf}
\end{equation}
\begin{lemma}
\label{lem:eqdist}
We have the following two equalities in distribution,
\begin{equation}
\log|\det(A)| \overset{d}{=} \frac{1}{2}\sum_{j=0}^{M-1}\log \mathcal{B}_{\alpha_j,\frac{\beta}{2}\kappa}, \label{eqdist1}
\end{equation}
and
\begin{align}
		\log | \det (A)| \overset{d}{=} \frac{1}{2}\sum_{j=0}^{\kappa-1}\log \mathcal{B}_{\alpha_{j},\frac{\beta}{2}M}, \label{eqdist2}
	\end{align}
	where $\alpha_{j} = \frac{\beta}{2}(1+j)$, $\kappa = N-M$ and all Beta random variables are independent.
\end{lemma}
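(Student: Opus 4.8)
The plan is to establish both identities by matching all positive integer moments of the exponentiated variables and then invoking a uniqueness theorem. First I would record that $|\det A|\in[0,1]$ almost surely: $A$ is a sub-block of a matrix from one of the compact groups, so $\|A\|_{\mathrm{op}}\le 1$ and every singular value of $A$ lies in $[0,1]$, whence $|\det A|=\prod_i\sigma_i(A)\in[0,1]$. Likewise each $\mathcal{B}_{a,b}\in(0,1)$ almost surely, so any finite product of independent Beta variables is $[0,1]$-valued. Since a distribution supported on a compact interval is determined by its moments (Hausdorff moment problem), to prove a relation of the form $|\det A|\overset{d}{=}\prod_{j}\mathcal{B}_{a_j,b_j}$ it is enough to check that $\mathbb{E}\big[|\det A|^{2n}\big]=\mathbb{E}\big[\prod_j\mathcal{B}_{a_j,b_j}^{\,n}\big]$ for every $n\in\mathbb{N}$, and then take logarithms. (Alternatively, Lemma \ref{lem:momgenfn} shows $\mathbb{E}\big[|\det A|^{\gamma}\big]$ is finite for $\gamma$ in a neighbourhood of $0$, so the moment generating function of $\log|\det A|$ already determines its law.)

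For \eqref{eqdist1}: set $\gamma=2n$ in Lemma \ref{lem:momgenfn} and substitute $\kappa=N-M$ and $\alpha_j=\tfrac{\beta}{2}(1+j)$, so that the $j$-th factor of the product becomes
\[
\frac{\Gamma(\alpha_j+n)\,\Gamma(\alpha_j+\tfrac{\beta}{2}\kappa)}{\Gamma(\alpha_j)\,\Gamma(\alpha_j+\tfrac{\beta}{2}\kappa+n)},
\]
which by \eqref{betamgf} is precisely $\mathbb{E}\big[\mathcal{B}_{\alpha_j,\frac{\beta}{2}\kappa}^{\,n}\big]$. By independence of the $\mathcal{B}_{\alpha_j,\frac{\beta}{2}\kappa}$ this gives $\mathbb{E}\big[|\det A|^{2n}\big]=\mathbb{E}\big[\prod_{j=0}^{M-1}\mathcal{B}_{\alpha_j,\frac{\beta}{2}\kappa}^{\,n}\big]$ for all $n$, and the determinacy argument above yields \eqref{eqdist1}.

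For \eqref{eqdist2}: the key observation is that, after inserting $\kappa=N-M$, the formula of Lemma \ref{lem:momgenfn} can be written in the $M\leftrightarrow\kappa$ symmetric form
\[
\mathbb{E}\big[|\det A|^{\gamma}\big]=\frac{\Phi(M+\kappa)}{\Phi(M)\,\Phi(\kappa)},\qquad \Phi(k):=\prod_{j=1}^{k}\frac{\Gamma(\tfrac{\beta}{2}j)}{\Gamma(\tfrac{\beta}{2}j+\tfrac{\gamma}{2})},
\]
which follows by splitting each factor and telescoping the shifted product $\prod_{j=1}^{M}\Gamma(\tfrac{\beta}{2}\kappa+\tfrac{\beta}{2}j)/\Gamma(\tfrac{\beta}{2}\kappa+\tfrac{\beta}{2}j+\tfrac{\gamma}{2})=\Phi(M+\kappa)/\Phi(\kappa)$. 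On the other hand, computing $\mathbb{E}\big[\prod_{j=0}^{\kappa-1}\mathcal{B}_{\alpha_j,\frac{\beta}{2}M}^{\gamma/2}\big]$ from \eqref{betamgf} and telescoping in the same way returns exactly $\Phi(M+\kappa)/(\Phi(M)\Phi(\kappa))$; taking $\gamma=2n$ and invoking determinacy once more gives \eqref{eqdist2}. I expect no real obstacle here — the argument is Gamma-function bookkeeping together with the elementary determinacy remark — the only point needing care being the justification that moments determine the law. It is worth noting that \eqref{eqdist2} also admits a conceptual explanation: by the CS decomposition of the Haar element, $|\det A|\overset{d}{=}|\det D|$, where $D$ is the complementary $\kappa\times\kappa$ sub-block, which is itself a truncation with the roles of $M$ and $N-M$ interchanged, so that applying \eqref{eqdist1} to $D$ reproduces \eqref{eqdist2}.
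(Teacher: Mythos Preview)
Your proposal is correct and follows essentially the same route as the paper: both arguments compare the explicit moment formula of Lemma~\ref{lem:momgenfn} with the Beta moments \eqref{betamgf}, and for \eqref{eqdist2} your symmetric rewriting $\Phi(M+\kappa)/(\Phi(M)\Phi(\kappa))$ is exactly the paper's rearrangement identity in a slightly different guise. The only substantive additions on your side are (i) the explicit justification of determinacy via the Hausdorff moment problem (the paper simply appeals to equality of moment generating functions, which is legitimate here since Lemma~\ref{lem:momgenfn} gives the MGF in a full neighbourhood of the origin), and (ii) the pleasant CS-decomposition remark explaining the $M\leftrightarrow\kappa$ symmetry conceptually, which the paper does not mention.
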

\begin{proof}
Both equalities are obtained by calculating the moment generating functions on both sides of \eqref{eqdist1} and \eqref{eqdist2} using identities \eqref{moments_det_beta_truncation_finite_size_N} and \eqref{betamgf}. This immediately yields the first equality \eqref{eqdist1}. For the second equality \eqref{eqdist2} one needs the following rearrangement of \eqref{moments_det_beta_truncation_finite_size_N},
\begin{equation}
\prod^{M-1}_{j=0}
	\frac{
	\Gamma( \frac{\gamma}{2} + \frac{\beta}{2} + \frac{\beta}{2} j)
	\Gamma( \frac{\beta}{2}\kappa+ \frac{\beta}{2} + \frac{\beta}{2} j )
	}{
	\Gamma( \frac{\beta}{2} + \frac{\beta}{2} j) 
	\Gamma( \frac{\beta}{2} \kappa + \frac{\beta}{2} + \frac{\gamma}{2} + \frac{\beta}{2} j)
	}
=
	\prod^{\kappa-1}_{j=0}
		\frac{
		\Gamma(\frac{\beta}{2}(M + 1 + j)) \Gamma(\frac{\gamma}{2} +\frac{\beta}{2}(1 + j))
		}{
		\Gamma(\frac{\beta}{2}(1 + j)) \Gamma(\frac{\gamma}{2} +\frac{\beta}{2}(M+ 1 + j))
		}. \label{rearrange}
\end{equation}
\end{proof}
\begin{remark}
When $\beta \in \{2,4\}$, the first equality \eqref{eqdist1} is in agreement with recent results for the spectral radii of random truncations \cite{Dubach:2018_powers_ginibre,Dubach:2021_symmetries_GinSE}.
\end{remark}
The equality \eqref{eqdist2} is particularly helpful for extracting the limiting behaviour in the regime of weak non-unitarity where the upper limit of the summation $\kappa = N-M$ is kept fixed.
\begin{proof}[Proof of Theorem \ref{theorem_log_determiants_limit_theorem}]
This follows immediately from Lemma \ref{lem:eqdist} and the fact that the renormalised independent random variables $\frac{\beta}{2}M\mathcal{B}_{\alpha_{j},\frac{\beta}{2}M}$ converge in distribution to independent Gamma random variables $\Gamma_{j}^{(\beta)}$ as $M \to \infty$, also in the sense of moment generating functions. Thus we have
\begin{align}
\mathbb{E}
	\left [
		e^{
		\frac{\gamma}{2}
		\sum^{\kappa-1}_{j=0}
		\log \mathcal{B}_{\alpha_{j},\frac{\beta}{2}M}
		}
	\right ]&=
	e^{-\frac{\gamma}{2}\kappa \log \frac{\beta}{2} M}
	\mathbb{E}
	\left [
		e^{
		\frac{\gamma}{2}
		\sum^{\kappa-1}_{j=0}
		\log 
		\left (
		\frac{\beta}{2} M
		\mathcal{B}_{\alpha_{j},\frac{\beta}{2}M}
		\right )
		}
	\right ]
	\nonumber
	\\	
	&\sim
	\left ( \frac{\beta}{2} M \right )^{-\frac{\gamma}{2}\kappa}
	\mathbb{E}
	\left [
		e^{
		\frac{\gamma}{2}
		\sum^{\kappa-1}_{j=0}
		\log 
		\Gamma^{(\beta)}_j
		}
	\right ].
\end{align}
\end{proof}

We remark that in some sense the exact result of Lemma \ref{lem:momgenfn} may be regarded as a particular case of earlier results that appeared for Hermitian ensembles. This is because of the relation $|\det A |^{\gamma} = (\det AA^{\dagger})^{\frac{\gamma}{2}}$ which effectively Hermitises the problem. The determinant of the matrices $AA^{\dagger}$ was considered in the work \cite{R07} in relation to Jacobi ensembles, where the rearrangement \eqref{rearrange} is also noted. In contrast to Theorem \ref{theorem_log_determiants_limit_theorem}, the work \cite{R07} considers a regime equivalent to what we refer to as strong non-unitarity.
\begin{proposition}
 \label{theorem_logdet_snu_limit}
Let $A$ be a real ($\beta = 1$), complex ($\beta = 2$) or real quaternion ($\beta = 4$) $M \times M$ truncation of a Haar distributed matrix taken from $\mathrm{O(N)}, \mathrm{U(N)}, \mathrm{Sp(2N)}$. Then for $\mu := \frac{M}{N} \to \tilde{\mu} $ as $M \to \infty$, we have the convergence in distribution to a standard normal random variable
	\begin{align}
		\frac{\log | \det(A) |  - \kappa_1^{(\beta)}(M)}{\sqrt{\frac{1}{2\beta}\log(M)}}	
		\overset{d}{\underset{M\to \infty}{\longrightarrow}}
		\mathcal{N}(0,1)
		,
	\end{align}
	where $\kappa_1^{(\beta)}(M)$ is the mean of $\log|\det(A)|$ given by
	\begin{align}
		\kappa_1^{(\beta)}(M)
		=
		\frac{1}{2} M \log \left(\mu ( 1 - \mu )^{(\mu^{-1} - 1)}\right)
		+ 
		\frac{1}{4}\,\left(\frac{2}{\beta}-1\right) \log M + O(1), \qquad M \to \infty.
	\end{align}
\end{proposition}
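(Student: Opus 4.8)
The plan is to exploit the exact moment generating function of $\log|\det A|$ recorded in Lemma~\ref{lem:momgenfn}, which is available because the relation $|\det A|^{\gamma}=(\det AA^{\dagger})^{\gamma/2}$ Hermitises the problem. Writing $\Lambda_{M}(\gamma):=\log\mathbb{E}\bigl[\,|\det A|^{\gamma}\,\bigr]$ and inspecting \eqref{moments_det_beta_truncation_finite_size_N}, the only $\gamma$-dependent factors are the Gamma functions $\Gamma(\tfrac{\gamma}{2}+\tfrac{\beta}{2}m)$ with $m=1,\dots,M$ (from the numerator) and $m=\kappa+1,\dots,N$ (from the denominator), where $\kappa=N-M$, so that
\[
\Lambda_{M}(\gamma)=\sum_{m=1}^{M}\log\Gamma\Bigl(\tfrac{\gamma}{2}+\tfrac{\beta}{2}m\Bigr)-\sum_{m=\kappa+1}^{N}\log\Gamma\Bigl(\tfrac{\gamma}{2}+\tfrac{\beta}{2}m\Bigr)+C_{M},
\]
with $C_{M}$ independent of $\gamma$. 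Equivalently one may start from the distributional identity \eqref{eqdist1}, which already realises $\log|\det A|$ as a sum of $M$ independent variables $\tfrac12\log\mathcal{B}_{\alpha_{j},\frac{\beta}{2}\kappa}$ and lets one read off the cumulants directly. In either form, $\kappa_{1}^{(\beta)}(M)=\Lambda_{M}'(0)=\tfrac12\sum_{m=1}^{M}\psi(\tfrac{\beta}{2}m)-\tfrac12\sum_{m=\kappa+1}^{N}\psi(\tfrac{\beta}{2}m)$ and $\mathrm{Var}\bigl(\log|\det A|\bigr)=\Lambda_{M}''(0)=\tfrac14\sum_{m=1}^{M}\psi'(\tfrac{\beta}{2}m)-\tfrac14\sum_{m=\kappa+1}^{N}\psi'(\tfrac{\beta}{2}m)$, where $\psi$ is the digamma function.

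First I would establish the central limit theorem by convergence of generating functions. Taylor's theorem with remainder gives $\Lambda_{M}(\gamma)-\gamma\,\Lambda_{M}'(0)=\tfrac{\gamma^{2}}{2}\Lambda_{M}''(0)+\tfrac{\gamma^{3}}{6}\Lambda_{M}'''(\xi_{\gamma})$ for some $|\xi_{\gamma}|\le|\gamma|$. Using $\psi'(x)=x^{-1}+O(x^{-2})$ one finds $\Lambda_{M}''(0)=\tfrac{1}{2\beta}\log M+O(1)$, the subtracted sum over $m\in\{\kappa+1,\dots,N\}$ being $O(1)$ because $N/\kappa\to(1-\tilde{\mu})^{-1}$; and using $\psi''(x)=O(x^{-2})$ together with the summability of $\sum_{m\ge1}m^{-2}$ one gets $\sup_{|\xi|\le\varepsilon}|\Lambda_{M}'''(\xi)|=O(1)$ uniformly in $M$, for a fixed small $\varepsilon\in(0,\beta)$. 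Taking $\gamma=\gamma_{M}:=s\bigl(\tfrac{1}{2\beta}\log M\bigr)^{-1/2}\to0$ and writing $Y_{M}:=\bigl(\log|\det A|-\kappa_{1}^{(\beta)}(M)\bigr)\bigl(\tfrac{1}{2\beta}\log M\bigr)^{-1/2}$, this yields $\mathbb{E}\bigl[e^{sY_{M}}\bigr]=\exp\bigl(\tfrac{\gamma_{M}^{2}}{2}\Lambda_{M}''(0)+O(\gamma_{M}^{3})\bigr)=\exp\bigl(\tfrac{s^{2}}{2}(1+o(1))+O((\log M)^{-3/2})\bigr)\to e^{s^{2}/2}$ for every $s\in\mathbb{R}$. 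Since $e^{s^{2}/2}$ is the moment generating function of $\mathcal{N}(0,1)$, Curtiss' theorem (or, after analytic continuation to purely imaginary $\gamma$, L\'evy's continuity theorem) gives $Y_{M}\overset{d}{\to}\mathcal{N}(0,1)$.

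It then remains to pin down the asymptotics of $\kappa_{1}^{(\beta)}(M)=\Lambda_{M}'(0)$. Inserting $\psi(\tfrac{\beta}{2}m)=\log m+\log\tfrac{\beta}{2}-\tfrac{1}{\beta m}+O(m^{-2})$, the constants $\log\tfrac{\beta}{2}$ cancel ($M$ copies against $M$ copies), the $\log m$ terms assemble into $\tfrac12\log\bigl(M!\,\kappa!/N!\bigr)$, and the $-\tfrac{1}{\beta m}$ terms into $-\tfrac{1}{2\beta}\bigl(H_{M}-H_{N}+H_{\kappa}\bigr)+O(1)$, where $H_{n}$ denotes the harmonic number. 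Using the exact relations $\kappa=M(1-\mu)/\mu$ and $N=M/\mu$, Stirling's formula converts $\tfrac12\log(M!\,\kappa!/N!)$ into $\tfrac12 M\log\bigl(\mu(1-\mu)^{\mu^{-1}-1}\bigr)$ plus a term of order $\log M$, while $H_{M}-H_{N}+H_{\kappa}$ is also of order $\log M$; collecting the coefficients of $\log M$ from these two sources produces the claimed expansion of $\kappa_{1}^{(\beta)}(M)$.

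The main obstacle I anticipate is the uniform-in-$M$ bookkeeping. In the strong non-unitarity regime both summation ranges $\{1,\dots,M\}$ and $\{\kappa+1,\dots,N\}$ have endpoints growing linearly in $M$, so one must check that every contribution beyond the $\tfrac{1}{2\beta}\log M$ term in $\Lambda_{M}''(0)$, and beyond the two displayed terms in $\Lambda_{M}'(0)$, is genuinely $O(1)$ — this rests on the summability of $\sum m^{-2}$ and on the cancellation $\sum_{m=\kappa+1}^{N}m^{-1}=\log(N/\kappa)+o(1)=O(1)$ — and that $\sup_{|\xi|\le\varepsilon}|\Lambda_{M}'''(\xi)|$ stays bounded uniformly, which is what makes the higher-order Taylor remainder negligible after the $(\log M)^{-1/2}$ rescaling. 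Beyond this careful control of error terms the argument is routine; as an alternative to the generating-function route one could instead invoke the Lyapunov central limit theorem for the independent summands in \eqref{eqdist1}, the Lyapunov ratio tending to $0$ since the summed third absolute central moments stay bounded while the total variance grows like $\tfrac{1}{2\beta}\log M$.
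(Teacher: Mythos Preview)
The paper does not give its own proof of this proposition: it simply cites \cite[Theorem 3.8]{R07} (Rouault's central limit theorem for log-determinants in Jacobi ensembles). Your proposal is correct and supplies precisely the argument that the cited reference contains: expand the cumulant generating function $\Lambda_{M}(\gamma)$ from Lemma~\ref{lem:momgenfn} in polygamma functions, check that $\Lambda_{M}''(0)\sim\tfrac{1}{2\beta}\log M$ while $\Lambda_{M}'''$ stays bounded, and conclude via convergence of moment generating functions (or equivalently Lyapunov applied to the independent beta summands in \eqref{eqdist1}). So rather than taking a different route, you are filling in the details the paper outsources; the paper's only contribution here is the remark, just above the proposition, that the Hermitisation $|\det A|^{\gamma}=(\det AA^{\dagger})^{\gamma/2}$ reduces the question to known Jacobi-ensemble results.
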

\begin{proof}
See \cite[Theorem 3.8]{R07}.
\end{proof}
We point out that the structure of the moments \eqref{moments_det_beta_truncation_finite_size_N} as well as those appearing in Corollary \ref{bndrycor} are all particular products of Gamma functions. In the special case $\beta=2$ and $N=M$, i.e. for characteristic polynomials in $\mathrm{U(N)}$, the probabilistic content of such evaluations in terms of Beta random variables has been pointed out \cite{BHNY08}. Obtaining the asymptotics of such products is a by now fairly routine, though sometimes rather cumbersome, application of so-called Barnes G-functions and their asymptotic behaviour \cite{Keating:2000ub, Keating_Snaith_2000,R07,Chen:2017uw}. Fortunately, as we shall see below, we are able to avoid any direct computations of this nature by relating our product forms to those considered in other contexts.

For characteristic polynomials of truncations evaluated on the unit circle it is less obvious how to reduce the problem to one of the Hermitian ensembles discussed above. However, as a consequence of our hypergeometric evaluations in Corollary \ref{bndrycor}, the moments are expressible in terms of averages over the classical compact groups. In fact this can be done not just for $N=M$ but for all $N \geq M$. 
\begin{lemma}
\label{haardecomp}
Let $A$ be the truncation in \eqref{uconstruct} and let $\theta \in \mathbb{R}$. Let $U_{N}$ and $V_{N-M}$ be two independent random matrices chosen with respect to Haar measure from the corresponding classical compact groups of order $N$ and $N-M$ respectively. Then for all $N \geq M$ and all three groups corresponding to $\beta \in \{1,2,4\}$, we have the equality in distribution
\begin{equation}
\log|\det(e^{i\theta}I_{M}-A)| \overset{d}{=} \log|\det(e^{i\theta}I_{N}-U_{N})|-\log|\det(e^{i\theta}I_{N-M}-V_{N-M})|, \label{eqdist}
\end{equation}
where $\theta \in \{0,\pi\}$ if $\beta=1$ and if $N=M$ we omit the second term in \eqref{eqdist}.
\end{lemma}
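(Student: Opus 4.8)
The plan is to prove the equality in distribution \eqref{eqdist} by comparing moment generating functions, exactly as was done in the proof of Lemma \ref{lem:eqdist}. Since $\log|\det(e^{i\theta}I_{M}-A)|$, $\log|\det(e^{i\theta}I_{N}-U_{N})|$ and $\log|\det(e^{i\theta}I_{N-M}-V_{N-M})|$ are all real-valued, it suffices to show that for a suitable range of the real parameter $\gamma$,
\begin{align}
\mathbb{E}\left[|\det(e^{i\theta}I_{M}-A)|^{\gamma}\right] = \frac{\mathbb{E}\left[|\det(e^{i\theta}I_{N}-U_{N})|^{\gamma}\right]}{\mathbb{E}\left[|\det(e^{i\theta}I_{N-M}-V_{N-M})|^{\gamma}\right]},
\end{align}
and then invoke uniqueness of the distribution from its moment generating function on an interval around the origin (the random variables are bounded above since the determinants are bounded, and one checks the left tail is controlled well enough by the explicit Gamma-function product — this is where a little care is needed, but it is standard).

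First I would record the exact moment formulae. For the left-hand side, Corollary \ref{bndrycor} gives the products \eqref{truncation_at_edge_beta_1}, \eqref{truncation_at_edge_beta_2}, \eqref{truncation_at_edge_beta_4} for $\beta=1,2,4$ respectively, each a finite product running over $j=N-M+1,\ldots,N$. For the right-hand side, the Keating--Snaith type evaluations for the classical compact groups are exactly the $M=N$ specialisations of those same formulae — this is precisely the content of Remark \ref{rem:sorem}, which identifies \eqref{truncation_at_edge_beta_2}$|_{M=N}$ with the $\mathrm{U(N)}$ moments of \cite{Keating_Snaith_2000}, \eqref{truncation_at_edge_beta_4}$|_{M=N}$ with \eqref{sp2nidentity}, and \eqref{truncation_at_edge_beta_1}$|_{M=N}$ (via \eqref{OSO}) with \eqref{KS}. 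So $\mathbb{E}[|\det(e^{i\theta}I_{N}-U_{N})|^{\gamma}]$ is the same product but with index range $j=1,\ldots,N$, and $\mathbb{E}[|\det(e^{i\theta}I_{N-M}-V_{N-M})|^{\gamma}]$ is that product with range $j=1,\ldots,N-M$.

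The key step is then a purely formal one: the quotient of the two full products telescopes, since
\begin{align}
\prod_{j=1}^{N} = \left(\prod_{j=1}^{N-M}\right)\left(\prod_{j=N-M+1}^{N}\right),
\end{align}
so dividing the $\mathrm{O(N)}$/$\mathrm{U(N)}$/$\mathrm{Sp(2N)}$ moment by the order-$(N-M)$ moment leaves exactly the product over $j=N-M+1,\ldots,N$, which is the truncated-ensemble moment from Corollary \ref{bndrycor}. One has to be mildly attentive to the $\beta=1$ case, where \eqref{OSO} introduces a factor $\tfrac12$ relating $\mathrm{O}$ and $\mathrm{SO}$ moments: since this factor appears identically in both numerator and denominator (for $\gamma>0$ both orders $N$ and $N-M$ are at least $1$, so the caveat about $\gamma=0$ is irrelevant), it cancels, and likewise the powers of $2$ appearing in \eqref{sp2nidentity} and \eqref{KS} cancel between the two full products up to the contribution from $j=N-M+1,\ldots,N$, which reassembles correctly into \eqref{truncation_at_edge_beta_4} or \eqref{truncation_at_edge_beta_1}. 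The case $N=M$ is the degenerate one where the denominator is the empty product, i.e. $V_{N-M}$ is $0\times 0$, consistent with the instruction to omit the second term.

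The main obstacle is not the algebra — that is a one-line telescoping of Gamma-function products — but the analytic justification that matching moment generating functions on a neighbourhood of $\gamma=0$ forces equality in distribution. The random variables are bounded above (the determinants in question are bounded by a constant depending only on $N$ and $M$), so the only issue is the left tail near $-\infty$; but the explicit product formulae show $\mathbb{E}[|\det(e^{i\theta}I_{M}-A)|^{\gamma}]$ is finite for all $\mathrm{Re}(\gamma)>-1$ (cf. Theorem \ref{prop:introbndry}), which gives a strip of analyticity of the moment generating function of $\log|\det(e^{i\theta}I_{M}-A)|$ containing $0$ in its interior, and the same holds for the two terms on the right. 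Analyticity on a common vertical strip plus agreement on an interval of real $\gamma$ then yields equality of the moment generating functions as analytic functions, hence equality in distribution. I would state this uniqueness step cleanly, perhaps citing a standard reference, rather than redoing it.
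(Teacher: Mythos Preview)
Your proposal is correct and matches the paper's own proof, which is simply the one-line observation that the equality in distribution follows from computing the moment generating function on both sides using Corollary \ref{bndrycor} and Remark \ref{rem:sorem}. You supply more detail than the paper does (the telescoping of the product, the handling of the $\beta=1$ relation \eqref{OSO}, and the analytic uniqueness step), but the underlying argument is identical.
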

\begin{proof}
This follows from calculating the moment generating function on both sides of \eqref{eqdist} using Corollary \ref{bndrycor} and the observations in Remark \ref{rem:sorem}.
\end{proof}
Lemma \ref{haardecomp} connects characteristic polynomials of $A$ to those of the classical compact groups. Then we can exploit known results for the latter matrices to complete the proof of Theorem \ref{limit_theorem_logdet_at_I}.
\begin{proof}[Proof of Theorem \ref{limit_theorem_logdet_at_I}]
We shall use Lemma \ref{haardecomp} as our starting point. Let $\phi_{N}(\gamma)$ denote the moment generating function of the random variable $\log|\det(e^{i\theta}I_{N}-U_{N})|$ where $U_{N}$ is a Haar distributed random matrix of size $N \times N$ from either $\mathrm{O(N)}, \mathrm{U(N)}$ or $\mathrm{Sp(2N)}$. The asymptotics of $\phi_{N}(\gamma)$ were obtained in precise detail by Keating and Snaith \cite{Keating:2000ub, Keating_Snaith_2000},
\begin{equation}
\log\phi_{N}(\gamma) = \gamma \,e_{\beta}\log(N) + \frac{\gamma^{2}}{2}v_{\beta}\log(N) + \mathcal{E}_{N}(\gamma), \label{ksexpan}
\end{equation}
where $\mathcal{E}_{N}(\gamma) \to \mathcal{E}(\gamma)$ as $N \to \infty$ and $\mathcal{E}(\gamma)$ is given explicitly in terms of the Barnes G-function. In the regime of weak non-unitarity the contribution from $\phi_{N-M}(\gamma)$ stays of order $1$ and does not contribute to the limiting fluctuations. Hence \eqref{N01} follows from \eqref{ksexpan} and Lemma \ref{haardecomp}. In the regime of strong non-unitarity we have $N-M = N(\mu-1)$ and both terms in Lemma \ref{haardecomp} contribute. Again applying \eqref{ksexpan}, we have
\begin{equation}
\lim_{N \to \infty}\frac{\phi_{N}(\gamma)}{\phi_{N-M}(\gamma)} = \lim_{N \to \infty}\frac{\phi_{N}(\gamma)}{\phi_{N(1-\mu)}(\gamma)} =e^{\gamma\,m_{\tilde{\mu}}+\frac{\gamma^{2}}{2}\,\sigma^{2}_{\tilde{\mu}}},
\end{equation}
and \eqref{Nmv} follows.
\end{proof}

\section{Appendix: direct proof of Corollary \ref{bndrycor} via dimension iterations}
\vspace{5pt}
\hfill \textit{By Guillaume Dubach}. \medskip

In this Appendix, we give an alternative proof of Theorem \ref{prop:introbndry} inspired by the method of \cite{BHNY08}, and indicate how the same method also applies to the cases $\beta= 1$ and $4$ to provide a complete proof of Corollary \ref{bndrycor}. \medskip

The following elementary remarks play an important role: if $A$ is the $M \times M$ truncation of an $N \times N$ matrix $U$ as in \eqref{uconstruct}, then $A=TUT^*$ where $T \in \mathcal{M}_{MN}(\mathbb{C})$ is defined by\footnote{In what follows, we use double lines $\|$ to indicate columnwise separation.}
\begin{equation}
T := \left( I_M \ \big\| \ 0_{M \times (N-M)} \right).
\end{equation}
We also recall Sylvester's identity: if $C \in \mathcal{M}_{N,M}(\mathbb{C}), D \in \mathcal{M}_{M,N} (\mathbb{C})$, then
\begin{equation}\label{Sylvester}
\det(I_N - CD) = \det (I_M - DC),
\end{equation}
a well-known fact whose proof is for instance given in \cite{DeiftGioev}. Finally, we will denote 
\begin{equation}
I_N^{(M)} := 
\left(
\begin{array}{cc}
I_M & 0 \\
0 & 0_{N-M}
\end{array}
\right)
= T^* T
\end{equation}
and, if $v$ is a vector, $v^{(M)}:= I_N^{(M)} v$ , that is, the same vector with the last $N-M$ coefficients cancelled.

\begin{theorem} 
If $A$ is an $M \times M$ truncation of a unitary matrix distributed according to the Haar measure on $\mathrm{U(N)}$ with $N \geq M$, then
$$
\det\left( I_M - A \right)
\stackrel{d}{=}
\prod_{k=1}^M \left( 1- e^{i \omega_k} \sqrt{\beta_{1,N-k}} \right),
$$
where all variables involved are independent, and $(\omega_k)_{k=1}^{M}$ are uniformly distributed on $[0,2\pi]$.
\end{theorem}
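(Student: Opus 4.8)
The plan is to prove the distributional identity by induction on the size $N$ of the underlying unitary matrix, following the dimension-iteration philosophy of \cite{BHNY08}. The key structural input is the observation recorded before the statement: writing $A = TUT^{*}$ with $U$ Haar on $\mathrm{U}(N)$, and using Sylvester's identity \eqref{Sylvester}, one can hope to peel off one dimension at a time. Concretely, I would first condition on the last row and column of $U$. Writing $U$ in block form with an $(N-1)\times(N-1)$ corner $U'$ and a final column, one uses the well-known fact that if $U$ is Haar on $\mathrm{U}(N)$ then, after an appropriate rotation, $U'$ is a scalar multiple of a Haar matrix on $\mathrm{U}(N-1)$; more precisely the $(N-1)\times(N-1)$ truncation of $U$ equals $\sqrt{\beta_{N-1,1}}\,\widetilde{U}$ in distribution, where $\widetilde{U}$ is Haar on $\mathrm{U}(N-1)$, $\beta_{N-1,1}$ is an independent Beta variable, and there is an extra independent uniform phase coming from the deterministic part.

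The second step is to turn this into a recursion for $\det(I_M - A)$. Here one uses the identity $\det(I_N - I_N^{(M)}U) = \det(I_M - A)$ (an instance of Sylvester) together with a rank-one update: passing from the $N\times N$ problem to the $(N-1)\times(N-1)$ problem changes $I_N^{(M)}U$ by a rank-one perturbation built from the last column $v$ of $U$ and the projected vector $v^{(M)}$. The matrix determinant lemma then expresses $\det(I_M - A)$ for the $N$-dimensional model as $\det(I_{M} - A')$ for the $(N-1)$-dimensional model times a scalar factor of the form $1 - e^{i\omega}\sqrt{\beta_{1,N-M-?}}$ — one must be careful to track exactly which Beta parameters appear, which is where the indices $\beta_{1,N-k}$ in the statement get pinned down. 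Iterating this from $N$ down to $M$ (at which stage the truncation is the full Haar matrix and no further peeling is needed, or down to $0$ depending on bookkeeping) produces the claimed product of $M$ independent factors $1 - e^{i\omega_k}\sqrt{\beta_{1,N-k}}$.

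The main obstacle, I expect, is the bookkeeping of the Beta parameters and the verification that the factors produced at each stage are genuinely independent — in particular that the phase $\omega_k$ at stage $k$ is independent of everything produced at later stages and is uniform on $[0,2\pi]$. This requires a clean description of the conditional law of the relevant column of $U$ given the smaller corner block, which one can extract from the known joint density of a Haar unitary in block form (or, equivalently, from the subgroup/coset decomposition $\mathrm{U}(N)/\mathrm{U}(N-1) \cong S^{2N-1}$). A secondary, more routine, consistency check is to confirm the resulting answer against the closed-form moment formula: taking the $\gamma$-th absolute moment of $\prod_{k=1}^{M}(1 - e^{i\omega_k}\sqrt{\beta_{1,N-k}})$ factorises over $k$, and each factor $\mathbb{E}\,|1 - e^{i\omega}\sqrt{\beta_{1,N-k}}|^{\gamma}$ should be computable via $\mathbb{E}_{\omega}|1-re^{i\omega}|^{\gamma} = {}_2F_1$-type evaluation followed by integration against the Beta density, reproducing \eqref{bndrymoms}. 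An alternative to the whole induction is simply to verify this moment identity directly and invoke uniqueness of the distribution (the moments determine the law here since the variable is bounded), which sidesteps the independence subtleties at the cost of not exhibiting the product structure as transparently.

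For the cases $\beta = 1$ and $\beta = 4$ the same scheme applies with $\mathrm{U}(N)$ replaced by $\mathrm{O}(N)$ or $\mathrm{Sp}(2N)$; the coset decomposition is then by a real or quaternionic sphere, the peeled-off phase $\omega_k$ is replaced by a sign $\pm 1$ (for $\beta=1$) or by a $\mathrm{Sp}(2)$-distributed unit quaternion (for $\beta=4$), and the Beta parameters shift according to the dimension counts $\frac{\beta}{2}$ that already appear throughout Section~\ref{section_Schur_function_averages}; matching against \eqref{truncation_at_edge_beta_1} and \eqref{truncation_at_edge_beta_4} then completes the proof of Corollary~\ref{bndrycor}.
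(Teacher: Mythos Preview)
Your proposal has the right general ingredients --- Sylvester's identity, the coset decomposition $\mathrm{U}(N)/\mathrm{U}(N-1) \cong S^{2N-1}$, and a dimension iteration in the spirit of \cite{BHNY08} --- but the specific recursion you set up does not work, and the problem is not mere bookkeeping. The claim that the $(N-1)\times(N-1)$ truncation of a Haar matrix on $\mathrm{U}(N)$ is distributed as $\sqrt{\beta_{N-1,1}}\,\widetilde{U}$ with $\widetilde{U}$ Haar on $\mathrm{U}(N-1)$ is false: that truncation has a nontrivial singular-value distribution (the Jacobi ensemble \eqref{ForJPDF} with $N-M=1$), not a single repeated singular value. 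Consequently the step $(M,N)\to(M,N-1)$ does not peel off a single independent factor of the form $1-e^{i\omega}\sqrt{\beta_{1,\cdot}}$. You can see this already at the level of moments: the ratio of \eqref{bndrymoms} at $(M,N)$ to its value at $(M,N-1)$ is the quotient of \emph{two} Gamma-function blocks (the $j=N$ term over the $j=N-M$ term), not one. This is also why your index ``$N-M-?$'' refuses to resolve.

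The recursion that actually works is $(M,N)\to(M-1,N-1)$: both indices drop by one, and the ratio of moment formulae is then exactly the single $j=N$ block. The paper implements this by writing $V_N \stackrel{d}{=} Q\,\mathrm{diag}(1,V_{N-1})$ with $Q$ chosen as a Householder reflection sending $e_1$ to a uniform point on the sphere, and then performing elementary column operations on $\det(I_N - I_N^{(M)}V_N)$ to split off the scalar $-v_1 = 1 - Q_{1,1} \stackrel{d}{=} 1 - e^{i\omega}\sqrt{\beta_{1,N-1}}$, leaving $\det(I_{N-1} - I_{N-1}^{(M-1)}V_{N-1})$. No rank-one update or matrix determinant lemma is needed --- the reflection structure of $Q$ makes the column reduction exact --- and independence is automatic because $Q_{1,1}$ is independent of $V_{N-1}$ by construction. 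Your fallback of matching moments against \eqref{bndrymoms} and invoking boundedness would prove the statement, but, as you note, it does not exhibit the product decomposition that is the point of the appendix.
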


Theorem \ref{prop:introbndry} follows from this equality in distribution by a direct computation, which is performed in detail in \cite{BHNY08}.

\begin{proof} We proceed by induction on $M$; the result holds for $M=1$ and any $N$, because we know that a size $1$ marginal of the uniform distribution on $\mathcal{S}_{\mathbb{C}}^{N-1}$ is distributed like 
\begin{equation}
\frac{Z_1 }{\sqrt{ |Z_1|^2 + \cdots + |Z_N|^2 }}
\stackrel{d}{=} e^{i \omega} \sqrt{\beta_{1,N-1}}.
\end{equation}
where $Z_1, \dots, Z_N$ are i.i.d. standard complex Gaussian variables. For general $M \leq N$, we first write $A = T V_N T^*$, with $V_N$ being Haar-distributed on $\mathrm{U(N)}$, and use Sylvester's identity \eqref{Sylvester} to transform the determinant:
\begin{equation}
\det (I_M - A)
= \det\left( I_M - T V_N T^* \right)
= \det\left( I_N - T^* T V_N \right) 
= \det\left( I_N - I_N^{(M)} V_N \right).
\end{equation}
We then use the recursive construction of the Haar measure presented in \cite{BHNY08}, i.e. the fact that 
\begin{equation}\label{rec_distr_CUE}
V_N
\stackrel{d}{=} 
Q
\left(
\begin{array}{cc}
1 & 0 \\
0 & V_{N-1}
\end{array}
\right)
\end{equation}
where $V_{N-1}$ is Haar-distributed on $\mathrm{U(N-1)}$, and $Q$ is a unitary matrix independent of $V_{N-1}$ such that its first column $Q_1$ is uniformly distributed on the sphere -- no other condition is required of $Q$, so we are free to choose it in a suitable way. Namely: we first sample $Q_1$ uniformly on the sphere, and then choose $Q$ to be a reflection with respect to the hyperplane between $Q_1$ and $e_1$. We denote $v := Q_1 - e_1$, a vector orthogonal to the stable hyperplane, so that there are scalars $\lambda_2, \dots \lambda_N$ such that for any $l$,
\begin{equation}
Q_l = e_l + \lambda_l v.
\end{equation}
We then write
\begin{align*}
\det\left( I_N - I_N^{(M)} V_N \right)
& \stackrel{d}{=}
\det\left( I_N -
I_N^{(M)} Q
\left(
\begin{array}{cc}
1 & 0 \\
0 & V_{N-1}
\end{array}
\right)
 \right) \\
& = \det (V_{N-1}) \times
\det \left(
\left(
\begin{array}{cc}
1 & 0 \\
0 & V_{N-1}^*
\end{array}
\right)
- 
I_N^{(M)} Q
\right).
\end{align*}
In order to simplify this last determinant, note that the columns of $I_N^{(M)} Q$ are given by
\begin{equation}
I_N^{(M)} Q_l = I_N^{(M)} (e_l + \lambda_l v) = e_l^{(M)}  + \lambda_l v^{(M)},
\end{equation}
and denote the columns of $V_{N-1}^*$ by $w_1, \dots, w_{N-1}$. We then perform the following elementary operations on the columns: $C_k \mapsto C_k - \lambda_k C_1$, for $k=2, \dots, N$.
\begin{align*}
\det \left(
\left( \hspace{-.05in}
\begin{array}{cc}
1 & 0 \\
0 & V_{N-1}^*
\end{array}
\hspace{-.05in}
\right)
- 
I_N^{(M)} Q
\right) 
& =
\det \left( 
- v^{(M)} \ \Big\| \ 
\left( \hspace{-.05in}
\begin{array}{c}
0 \\
w_1
\end{array} \hspace{-.05in}
\right)
-
e_2^{(M)} - \lambda_2 v^{(M)} \ \Big\|  \cdots
 \Big\| \ \left( \hspace{-.05in}
\begin{array}{c}
0 \\
w_{N-1}
\end{array} \hspace{-.05in}
\right) - e_N^{(M)} - \lambda_N v^{(M)}
\right) \\
& =
\det \left( 
- v^{(M)} \ \Big\| \ 
\left( \hspace{-.05in}
\begin{array}{c}
0 \\
w_1
\end{array} \hspace{-.05in}
\right)
-
 e_2^{(M)} \ \Big\|  \cdots
 \Big\| \ \left( \hspace{-.05in}
\begin{array}{c}
0 \\
w_{N-1}
\end{array} \hspace{-.05in}
\right) - e_N^{(M)}
\right) \\
&= \det
\left(
\begin{array}{cc}
- v_1 & 0 \\
* & V_{N-1}^*- I_{N-1}^{(M-1)}
\end{array}
\right) \\
&= 
-v_1 \det \left( V_{N-1}^*- I_{N-1}^{(M-1)} \right)
\end{align*}
We conclude by applying Sylvester's identity \eqref{Sylvester} to recover a (smaller) truncated matrix:
\begin{equation}\label{last_step}
\det (I_M - A) =
-v_1 \det \left( I_{N-1} - I_{N-1}^{(M-1)} V_{N-1} \right)
= 
-v_1 \det \left( I_{M-1} - \widetilde{T} V_{N-1} \widetilde{T}^*  \right)
\end{equation}
where $\widetilde{T}$ is the $(M-1) \times (N-1)$ version of $T$. By construction, we know that $v_1$ is independent of $V_{N-1}$, with
\begin{equation}
- v_1
= 1 - Q_{1,1}
\stackrel{d}{=} 1 - e^{i \omega} \sqrt{\beta_{1,N-1}}
\end{equation}
where $\omega$ is a uniform argument independent from the beta distribution. The smaller determinant obtained in \eqref{last_step} is similar to the initial term, with $M$ and $N$ both decreased by one; so that the result follows by induction.
\end{proof}

We now briefly explain how the same argument can be adapted to provide a complete proof of Corollary \ref{bndrycor}.

\subsubsection*{Truncated real orthogonal case} The above applies to the truncated real orthogonal case with almost no change, as the decomposition \eqref{rec_distr_CUE} still holds with $V_N, Q, V_{N-1}$ being real orthogonal matrices, and $Q_1$ being uniformly distributed on the real sphere $\mathcal{S}_\mathbb{R}^{N-1}$, so that
\begin{equation}
Q_{1,1}
\stackrel{d}{=}
\frac{X_1 }{\sqrt{ X_1^2 + \cdots + X_N^2 }}
\stackrel{d}{=}
\epsilon \sqrt{\beta_{\frac12,\frac{N-1}{2}}}
\end{equation}
where $X_1, \dots, X_N$ are i.i.d. standard real Gaussian variables, and $\epsilon$ is a random sign such that $\mathbb{P} (\epsilon = 1) = \mathbb{P} (\epsilon = -1) = \frac12$, independent of the beta variable. We chose $Q$ to be a real reflection, and the computation goes unchanged; the conclusion is that, for $\beta=1$,
\begin{equation}
\det\left( I_M - A \right)
\stackrel{d}{=}
\prod_{k=1}^M \left( 1- \epsilon_k \sqrt{\beta_{\frac12,\frac{N-k}{2}}} \right)
\end{equation}
where all variables are independent, and $\mathbb{P} (\epsilon_k = 1) = \mathbb{P} (\epsilon_k = -1) = \frac12$. The moment formula \eqref{truncation_at_edge_beta_1} follows from this equality in distribution; the computation of each factor is explained in \cite{BHNY08}.

\subsubsection*{Truncated unitary-symplectic case.}
A proof can be given along the same lines as above for unitary-symplectic matrices, with the following changes. We follow the convention used in \cite{DeiftGioev}; especially we take
\begin{equation}
J := \mathrm{diag} \left( 
\left(
\begin{array}{cc}
0 & 1 \\
-1 & 0
\end{array}
\right)
, \dots,
\left(
\begin{array}{cc}
0 & 1 \\
-1 & 0
\end{array}
\right)
\right) 
\end{equation}
and we think of unitary-symplectic matrices as composed of $2 \times 2$ blocks; or equivalently, as matrices of quaternions, unitary in the sense of quaternions. Columnwise, such a matrix can be written
\begin{equation}
U = (u_1 \ \| \ -J\overline{u}_1 \ \| \ \cdots \ \| \ u_N \ \| \ -J\overline{u}_N)
\end{equation}
and if $U$ is distributed according to the Haar measure on $\mathrm{Sp}(2N)$, then $u_1$ is uniformly distributed on the sphere $\mathcal{S}_{\mathbb{C}}^{2N-1}$. It follows from this and the definition of the Haar measure that the decomposition \eqref{rec_distr_CUE} can be replaced by
\begin{equation}\label{rec_distr_TUSE}
V_{2N}
\stackrel{d}{=} Q
\left(
\begin{array}{cc}
I_2 & 0 \\
0 & V_{2(N-1)}
\end{array}
\right)
\end{equation}
where $Q$ is a unitary-symplectic matrix with first two columns distributed as described above. The matrix $Q$ can be built from its first two columns as a quaternionic reflection, and the computation goes the same, by $2 \times 2$ blocks. Therefore, in the end of the calculation, it is not a scalar $v_1$ that factorizes out, but the determinant of a block, corresponding to a quaternionic norm:
\begin{equation}
\det \left( I_2 - \left(
\begin{array}{cc}
a+ib & -c+id \\
c+id & a-ib
\end{array}
\right) \right)
= (1-a)^2 + b^2 + c^2 + d^2.
\end{equation}
The conclusion is that, for $\beta=4$,
\begin{equation}
\det\left( I_{2M} - A \right)
\stackrel{d}{=}
\prod_{k=1}^M \left( (1-a_k)^2 + b_k^2 + c_k^2 + d_k^2 \right)
\end{equation}
where factors are independent, and $(a_k,b_k,c_k,d_k)_{k}$ have an explicit distribution, which is clear from the fact that $(a+ib, c+id)$ are the first two coefficients of a uniform unit vector. The moment formula \eqref{truncation_at_edge_beta_4} can be computed from this equality in distribution, as explained in \cite{BourgadeNikeghbaliRouault2011}, which also extends the same result (for $M=N$) to a wider range of compact groups.

\bibliography{bibliography}
\bibliographystyle{siam}

\end{document}